\documentclass[11pt]{article}
\usepackage[margin=1in]{geometry}
\usepackage{authblk}

\usepackage{amsfonts}
\usepackage{amssymb}
\usepackage[title]{appendix}
\usepackage{booktabs}
\usepackage{dsfont}
\usepackage{xcolor}

\usepackage{amsmath}
\usepackage{mathtools}
\DeclareMathOperator{\dist}{dist}
\DeclareMathOperator{\PP}{\mathbb{P}}
\DeclareMathOperator{\E}{\mathbb{E}}
\DeclareMathOperator{\var}{\mathbb{V}ar}
\DeclareMathOperator{\cov}{\mathbb{C}ov}

\usepackage{amsthm}
\newtheorem{corollary}{Corollary}
\newtheorem{lemma}{Lemma}

\newtheorem{theorem}{Theorem}
\theoremstyle{definition}
\newtheorem{assumption}{Assumption}
\theoremstyle{remark}

\usepackage[natbib=true, style=authoryear-comp, backend=biber]{biblatex}
\PassOptionsToPackage{hyphens}{url}
\usepackage{hyperref}
\hypersetup{
    colorlinks=true,
    linkcolor=red!75!black,
    citecolor=purple!70!black,
    filecolor=magenta!75!black,
    urlcolor=blue!75!black,
}

\usepackage{cleveref}
\crefname{assumption}{assumption}{assumptions}
\Crefname{assumption}{Assumption}{Assumptions}
\crefname{appendix}{appendix}{appendices}
\Crefname{appendix}{Appendix}{Appendices}
\creflabelformat{equation}{#2#1#3}

\usepackage{subcaption}
\usepackage{tikz}
\usepackage{graphicx}
\graphicspath{{plots/}}

\title{Network Experiments with Edge Treatments and Node Outcomes%
\thanks{We thank Peter Tu, Dominic Coey, Chenkun Wang, and Tuan Phan for helpful discussions and comments. All errors are our own.}}
\author[1]{Artem Kuriksha}
\author[2]{Kenneth Hung}
\affil[1]{Ranking AI Research, Meta}
\affil[2]{Ads Online Experimentation, Meta}
\affil[ ]{\texttt{\char`\{}\href{mailto:kuriksha@meta.com}{\textcolor{black}{\texttt{kuriksha}}}\texttt{, }\href{mailto:kenhung@meta.com}{\textcolor{black}{\texttt{kenhung}}}\texttt{\char`\}@meta.com}}
\date{October 5, 2026}

\newcommand{\Bern}{\mathrm{Bern}}
\newcommand{\toprob}{\stackrel{p}{\to}}
\newcommand{\todist}{\stackrel{d}{\to}}
\newcommand{\dmax}{d_\mathrm{max}}
\newcommand{\Normal}{\mathcal{N}}
\newcommand{\thetan}{\theta^{(n)}}
\newcommand{\thetahat}{\hat\theta^{(n)}}
\newcommand{\DeltaY}{\Delta_Y}

\newcommand{\DeltaHt}{\widetilde{\Delta}_H}
\newcommand{\abarn}{\bar{a}^{(n)}}
\newcommand{\Xbarn}{\bar{X}_1^{(n)}}
\newcommand{\phatn}{\hat{p}_1^{(n)}}
\newcommand{\sigman}{\sigma_n}
\newcommand{\sigmanhat}{\hat{\sigma}_n}
\newcommand{\sigmandiag}{\hat{\sigma}_{n,\mathrm{diag}}}
\newcommand{\sigmantilde}{\tilde{\sigma}_n}

\begin{document}

\maketitle

\begin{abstract}
    We present a methodology for analyzing node-level outcomes while experimenting with edge-level treatments in a population connected by an undirected graph. Under our design, nodes are randomly assigned to test or control, and each edge inherits the treatment of its endpoints, with conflicts resolved by randomization. We use each node's assigned status as an instrument for its treatment exposure. We show that the Wald estimator is consistent for the global average treatment effect (GATE), even when the edge weights used to construct the exposure are misspecified. We formalize the assumptions needed both in terms of the linearity of potential outcomes and the sparsity of the graph, and prove the asymptotic normality of the Wald estimator. The estimator is straightforward to implement, requiring no assignment simulations over the graph. Our Monte Carlo study demonstrates the strong performance of our approach in the context of a social platform.
\end{abstract}

\section{Introduction}
\label{sec:intro}

Many online experiments involve treatments that are naturally defined at the edge level of a social network. For instance, a messaging or calling platform may test a new feature in the communication channel between pairs of users, or a social network may test changes to the algorithm governing content shared between friends. While the treatment is inherently an edge-level intervention, the outcomes of interest, such as user engagement or satisfaction, are measured at the node level. This mismatch creates a practical tension: a control node and a test node sharing an edge cannot both have a consistent experience, since their shared edge can only carry one version of the treatment. The fundamental challenge is estimating what would happen if the treatment were applied to \emph{all} edges, given that any finite experiment can only observe a mix of treated and untreated edges for most nodes.

For edge-level treatments, one natural approach is edge-level randomization, in which each edge independently receives a treatment assignment. However, a simple edge-level analysis cannot recover effects on node-level outcomes. Moreover, even for outcomes also recorded at the edge level, spillovers between edges incident to the same node pose a serious threat to its validity, since a user's behavior on one channel likely depends on the treatment of their other channels. \citet{harshaw2023design} enable extrapolation in this case by assuming a linear exposure-response model and using a Horvitz--Thompson-style estimator. However, this design suffers from the problem that the exposure of a high-degree node concentrates around its mean, making extrapolation to the global counterfactual difficult. As a result, the estimator will have high variance.

Another potential solution is to randomize at the cluster level, assigning entire groups of connected users to the same arm. This delivers a consistent experience along within-cluster edges, but it can reduce statistical power, since the effective sample size shrinks toward the number of clusters.

We argue that node-level randomization offers a promising alternative. By assigning a treatment label to each node and letting the treatment of an edge be determined by the labels of its endpoints, we ensure that most edges incident to a node share the same treatment. From the perspective of any given node, the node-level assignment acts as an \emph{instrumental variable} (IV) for the node's aggregate exposure to treated edges: it is randomly assigned, affects the outcome only through the exposure, and shifts the exposure toward the treated or control end of the spectrum.

In this paper, we formalize this approach and propose the Wald estimator for GATE. The estimator is simple to implement and accommodates many practical features of online experiments, such as multiple arms or holdout groups. It does not require network simulations, which are difficult to run and sometimes infeasible when the graph information is not available after the experiment. Moreover, even when the weights the researcher uses to construct the exposure are misspecified, the Wald estimator remains consistent for GATE, so the true edge weights need not be known. Crucially, our methodology delivers variance close to that of a standard node-level A/B test.

Our main contributions are fourfold. First, we present a node-level randomization design and show that it induces \emph{uniform compliance} (\Cref{lma:uniform-compliance}). This means that the response of expected exposure to status assignment is identical across all units, which ensures that the Wald estimator targets GATE rather than a local average treatment effect (LATE) weighted by individual responsiveness. Second, we prove the consistency of the Wald estimator under a linear potential outcomes model and sparse graph assumptions (\Cref{thm:consistency}), and characterize its limit under deviations from linearity (\Cref{thm:nonlinear-weighting}). Third, we establish asymptotic normality (\Cref{thm:normal}) and propose both a conservative variance estimator and a simpler approximate estimator, deriving their asymptotic behavior (\Cref{thm:variance-limit,thm:diagonal-quadratic}). Fourth, we validate the performance of the Wald estimator and both variance estimators in an empirical setting.

Our analysis has three main limitations. First, we restrict the exposure mapping so that each node's outcome depends only on its own exposure, ruling out interference between distinct nodes. Interference between edges incident to a single node remains fully accommodated through the exposure aggregate. Second, identifying GATE requires linearity in exposure. Without it, Wald instead targets the weighted average response characterized in \Cref{thm:nonlinear-weighting}. Finally, we treat the graph as exogenous to the treatment, so settings in which the treatment itself shifts the edge weights fall outside our analysis.

% Bipartite and edge-level experiments (harshaw2023design)
Our work is connected to several strands of the network experiments literature, most directly to bipartite causal inference \citep{zigler2020bipartite, papadogeorgou2025causal}. There, intervention units and outcome units belong to distinct populations. Randomization over intervention units corresponds to edge-level randomization in our setting. The work most directly comparable to ours is \citet{harshaw2023design}, who share both our linearity assumption and our estimand (GATE), and whose identification relies on an exposure mapping that aggregates edge-level assignments into a treatment intensity for the outcome unit. Their exposure-reweighted linear (ERL) estimator applies inverse-variance reweighting. The problem is that, for high-degree nodes, the exposure is an average over many edges and concentrates around its mean, driving the exposure variance in ERL's denominator toward zero and inflating the inverse-variance weights and estimator variance. To counteract this, \citet{harshaw2023design} build on \citet{pouget2019variance} and pair ERL with a clustering step that correlates assignments to restore exposure variance. Under our node-level design, the exposure is largely driven by the node's own treatment assignment, creating substantially more variation in exposure. Our design makes the Wald estimator consistent. At the same time, the estimator is substantially easier to deploy than ERL and more robust to the practical complexities of online experiments.

% Design of network experiments (cluster randomization: ugander2013graph, eckles2017design)
Within the broader literature on the design of network experiments, much of the work focuses on cluster randomization \citep{gui2015network, karrer2021network, eckles2017design}, where the central motivation is reducing interference bias. Cluster-randomized experiments partition the network into clusters and randomize treatment at the cluster level, ensuring that within-cluster edges receive the same treatment. Therefore, clustering can also serve as a way to deliver a consistent user experience along most edges. Our approach takes a fundamentally different perspective: we use node-level randomization and the IV framework to correct for the resulting dilution. This allows for a much larger effective sample size and eliminates the need for a clustering step, which requires knowledge of the graph structure at design time and is sensitive to the quality of the partition. The cost is that we assume away spillovers from neighbors' exposures to a node's outcome, whereas cluster designs aim to contain interference physically through the clustering.

% Instrumental variables with variable treatment intensity (wald1940fitting, angrist1995twostage, imbens1994identification, angrist2000interpretation)
Our use of the Wald estimator connects to the classical literature on instrumental variables. \citet{imbens1994identification, angrist1996identification} establish the LATE theorem: under monotonicity and a nonzero first stage, the Wald estimator with a binary instrument identifies the average treatment effect for compliers. \citet{angrist1995twostage, angrist2000interpretation} extend this to settings with variable treatment intensity, where the IV estimator converges to a weighted average with weights that depend on both the outcome's sensitivity to treatment and the treatment's responsiveness to the instrument. In our setting, the linearity of potential outcomes and the uniformity of compliance eliminate this heterogeneity and ensure that the Wald estimator recovers GATE. Our technical contribution is establishing consistency and asymptotic normality of the Wald estimator in the network setting, and deriving the sparsity conditions on the graph topology under which these results hold.

% IV under interference and network exposure (eckles2016estimating, kang2016peer, borusyak2020nonrandom, adao2019shift)
Similarly to us, several literatures apply IV to settings where exposure is a non-randomized function of multiple randomized assignments. \citet{eckles2016estimating} randomize encouragements to a focal user's peers and instrument for the resulting volume of feedback received by the focal user. \citet{kang2016peer} randomize encouragement at both the cluster and individual level under partial interference, instrumenting for own and peer treatment uptake. \citet{borusyak2020nonrandom} formalize IV with a known exposure mapping applied to randomized shocks, and recenter the instrument to correct for non-random variation in the mapping across units. These papers target identification under different scenarios of endogeneity. Our work targets a framework for network experiments at scale, with identification of GATE delivered by the experiment design and a tuning-free Wald estimator.

The remainder of this paper is organized as follows. \Cref{sec:setting} introduces the setting and the proposed assignment rule, and \Cref{sec:estimator} presents the proposed estimator. \Cref{sec:consistency} establishes consistency, and \Cref{sec:normality} proves asymptotic normality and variance estimation. \Cref{sec:simulations} studies performance in simulations based on production data from interactions on a social platform. \Cref{sec:practical} discusses practical extensions of the framework. \Cref{sec:conclusion} concludes.

\section{Setting}
\label{sec:setting}

We have a population of $n$ nodes connected by an undirected graph $G$, and assume that every node has at least one edge. The graph is equipped with a nonnegative weighted adjacency matrix $W = (w_{ij})_{i, j}$, with $w_{ij} = 0$ if $i$ and $j$ are not connected in $G$ and $\sum_j w_{ij} = 1$ for all nodes $i$. We allow $W$ to be asymmetric: $w_{ij}$ need not equal $w_{ji}$. For example, nodes may represent users on a messaging platform, with an edge connecting two users who have communicated. The weight $w_{ij}$ could be the share of user $i$'s total messages exchanged with user $j$. We define the $k$-hop neighborhood of node $i$ as $N_k(i) = \{j: \dist(i, j) \leqslant k\}$ and its degree as $d_i \coloneq \#\{j: \dist(i, j) = 1\}$.

We sample a proportion $p$ of the nodes into the experiment, and randomly assign the nodes into two treatments: test and control. For node $i$, we denote whether it was sampled with $S_i \sim \Bern(p)$, and its assigned treatment as $D_i \sim \Bern(0.5)$. We will sometimes refer to nodes with $S_i = 1$, $D_i = 0$ as \emph{control} nodes and to nodes with $S_i = 1$, $D_i = 1$ as \emph{test} nodes.

Recall that the treatment is edge-level. For an edge between a sampled node and an unsampled node, we apply the treatment of the sampled node. For an edge between two sampled nodes with the same status, the edge inherits that shared treatment. However, for a \emph{conflicting} edge between a test node and a control node, we must decide (potentially in a randomized way) on a treatment assignment for this edge. Formally, the treatment of an edge $ij$ is given by
\begin{equation}
    Z_{ij} = Z_{ji} \coloneq \begin{cases}
        D_i & \text{if } S_i = 1, S_j = 0, \\
        D_j & \text{if } S_i = 0, S_j = 1, \\
        D_i = D_j & \text{if } S_i = S_j = 1, D_i = D_j, \\
        U_{ij} = U_{ji} & \text{if } S_i = S_j = 1, D_i \ne D_j, \\
        0 & \text{if } S_i = S_j = 0
    \end{cases}.
\label{eq:assignment_rule}
\end{equation}
For example, if we default to assigning control to a conflicting edge, then we are defining $U_{ij} \coloneq 0$. Alternatively, if we randomize, then we are defining $U_{ij} \sim \Bern(0.5)$. \Cref{fig:setup-examples} illustrates how the assignment rule resolves edge assignments around a focal control node.

\begin{figure}[t]
    \centering
    \begin{subfigure}[b]{0.45\textwidth}
        \centering
        \begin{tikzpicture}[
            sampled/.style={circle, draw, thick, minimum size=7mm, font=\bfseries},
            unsampled/.style={circle, draw, fill=black!15, minimum size=7mm},
            cedge/.style={thick, blue!60!black},
            tedge/.style={thick, red!70!black},
            edgelabel/.style={fill=white, inner sep=1pt, font=\small},
        ]
            \node[sampled] (c) at (0, 0) {C};
            \node[sampled] (n1) at (45:1.6) {C};
            \node[unsampled] (n2) at (-45:1.6) {};
            \node[unsampled] (n3) at (-135:1.6) {};
            \node[unsampled] (n4) at (135:1.6) {};
            \draw[cedge] (c) -- node[edgelabel] {C} (n1);
            \draw[cedge] (c) -- node[edgelabel] {C} (n2);
            \draw[cedge] (c) -- node[edgelabel] {C} (n3);
            \draw[cedge] (c) -- node[edgelabel] {C} (n4);
        \end{tikzpicture}
        \caption{No conflicting edges.}
        \label{fig:setup-no-conflict}
    \end{subfigure}
    \hfill
    \begin{subfigure}[b]{0.45\textwidth}
        \centering
        \begin{tikzpicture}[
            sampled/.style={circle, draw, thick, minimum size=7mm, font=\bfseries},
            unsampled/.style={circle, draw, fill=black!15, minimum size=7mm},
            cedge/.style={thick, blue!60!black},
            tedge/.style={thick, red!70!black},
            edgelabel/.style={fill=white, inner sep=1pt, font=\small},
        ]
            \node[sampled] (c) at (0, 0) {C};
            \node[sampled] (n1) at (45:1.6) {C};
            \node[sampled] (n2) at (-45:1.6) {T};
            \node[unsampled] (n3) at (-135:1.6) {};
            \node[unsampled] (n4) at (135:1.6) {};
            \draw[cedge] (c) -- node[edgelabel] {C} (n1);
            \draw[tedge] (c) -- node[edgelabel] {T} (n2);
            \draw[cedge] (c) -- node[edgelabel] {C} (n3);
            \draw[cedge] (c) -- node[edgelabel] {C} (n4);
        \end{tikzpicture}
        \caption{Conflicting edge.}
        \label{fig:setup-conflict}
    \end{subfigure}
    \caption{Edge assignment rule and conflict resolution.}
    \label{fig:setup-examples}
    \vspace{1ex}
    \begin{minipage}{\textwidth}
        \footnotesize \textit{Note:} This figure presents examples of possible local configurations for a control node. Sampled nodes ($S_i = 1$) are shown as circles labeled by their treatment assignment $D_i \in \{C, T\}$. Unsampled nodes ($S_i = 0$) are shown as gray-filled circles without labels. Edge labels show the realized edge treatment $Z_{ij}$ from \Cref{eq:assignment_rule}. In (b), we have drawn the conflicting edge as assigned to $T$. Under equal edge weights $w_{ij} = 1/4$, the focal node's exposure is $H_i = 0$ in (a) and $H_i = 1/4$ in (b).
    \end{minipage}
\end{figure}

\begin{assumption}[Random assignment]
\label{asm:random}
All experiment assignments $S_i$, treatment assignments $D_i$, and exogenous randomness $U_{ij}$ are jointly independent, and $U_{ij}$ are all identically distributed.
\end{assumption}

We further assume that the experiment does not affect the graph, or, more generally, that the effect of the experiment on the graph is negligibly small.

\begin{assumption}[Exogenous graph]
\label{asm:exogenous-graph}
The graph $G$ and the weights $W$ are fixed.
\end{assumption}

We follow the Neyman--Rubin potential outcomes framework \citep[see, e.g.,][]{rubin2005causal} and treat the potential outcomes as fixed. We make three assumptions about how the treatment affects the outcome metric $Y_i$. First, we assume that the node's outcome only responds to treatments of incident edges.

\begin{assumption}[Interference is local]
    $Y_i$ depends on the sampling assignments, treatment assignments, and edge tie-breaks of the entire population only through the realized incident edge treatments $(Z_{ij})_j$.
\label{asm:locality}
\end{assumption}

Second, we assume the incident-edge treatments enter the outcome only through a single weighted summary, the exposure.

\begin{assumption}[Exposure aggregation]
    $Y_i$ depends on $(Z_{ij})_j$ only through the scalar exposure $H_i = \sum_j w_{ij} Z_{ij}$.
\label{asm:summary}
\end{assumption}

In practice, the true edge weights $w_{ij}$ are rarely known. We therefore consider the case in which the researcher constructs the exposure from another weighting $\tilde w_{ij}$ that need not coincide with the true $w_{ij}$ but shares its structure: it is nonnegative, supported on the edges of $G$, and normalized so that $\sum_j \tilde w_{ij} = 1$. A realistic example is the uniform weighting $\tilde w_{ij} = 1/d_i$, which uses only the adjacency structure of the graph. This gives the proxy exposure $\tilde H_i \coloneq \sum_j \tilde w_{ij} Z_{ij}$. Throughout, the true exposure $H_i$ is the special case $\tilde w = w$, so any statement we establish for the proxy exposure $\tilde H_i$ applies to $H_i$ as well.

Third, we assume the dependence of $Y_i$ on $H_i$ is linear.

\begin{assumption}[Linear dose-response]
    The outcome $Y_i$ is linear in $H_i$, namely $Y_i = a_i + b_i H_i$.
\label{asm:linearity}
\end{assumption}

With the constraint $\sum_j w_{ij} = 1$ for all nodes $i$, the exposure $H_i$ takes values in $[0, 1]$. For convenience, we also denote $\bar{a} = \bar{a}^{(n)} \coloneq \frac{1}{n} \sum_i a_i$ and $\bar{b} = \bar{b}^{(n)} \coloneq \frac{1}{n} \sum_i b_i$, using the $(n)$ superscript when we want to emphasize dependence on the population size. The estimand is GATE. When the control treatment is applied to the entire population, the average metric is $\frac{1}{n} \sum_i Y_i = \bar{a}$. Likewise, under the test treatment, the exposure $H_i$ for all nodes becomes $1$, and the average metric is $\frac{1}{n} \sum_i Y_i = \bar{a} + \bar{b}$. Therefore the estimand is given by
\[
    \thetan \coloneq \frac{1}{n} \sum_i b_i = \bar{b}.
\]

To avoid a single unit having out-sized influence on GATE, we also assume that the potential outcomes are bounded in \Cref{asm:bounded}.
\begin{assumption}[Bounded outcome]
    There exists a constant $M > 0$ such that $|a_i|, |a_i + b_i| < M$ for all nodes $i$.
\label{asm:bounded}
\end{assumption}

In online experiments, the population size is often extremely large. Formally, our results will be derived based on a sequence of networks indexed by the population size $n$, but suppressing the index $n$ where unambiguous. We further assume that the graph is sufficiently sparse in \Cref{asm:sparse-weak}.
\begin{assumption}[Weak sparsity]
    The maximum degree $\dmax \coloneq \max_i d_i = o(n^{1/2})$.
\label{asm:sparse-weak}
\end{assumption}

The role of \Cref{asm:sparse-weak} is to limit the dependence structure of the exposures $\{\tilde H_i\}$ (and hence the true exposures $\{H_i\}$ too, the special case $\tilde w = w$). Two exposures $\tilde H_i$ and $\tilde H_j$ are independent whenever $i$ and $j$ share neither an edge nor a common neighbor, since no underlying random variable then enters both sums $\tilde H_i = \sum_k \tilde w_{ik} Z_{ik}$ and $\tilde H_j = \sum_k \tilde w_{jk} Z_{jk}$. \Cref{asm:sparse-weak} bounds the number of such dependent pairs tightly enough for the law-of-large-numbers arguments in \Cref{sec:consistency}, which \Cref{lma:local-average} makes precise.

\section{Estimator}
\label{sec:estimator}

It might seem natural to simply regress the outcome $Y_i$ on the exposure $H_i$. However, ordinary least squares (OLS) is generally invalid in this network setting, even when the correct weights are known. Users may differ both in their responses to treatment and in their exposure distributions. In particular, given a node's own assignment, we expect the exposure of high-degree nodes to concentrate, whereas for low-degree nodes it stays more dispersed. The slope $b_i$ and the regressor $H_i$ are therefore not independent, which places us in a random coefficient model \citep{wooldridge1997two, wooldridge2003further}.
The next result quantifies this bias.

\begin{theorem}[Asymptotic bias of OLS]
    Under \Cref{asm:random,asm:exogenous-graph,asm:locality,asm:summary,asm:linearity,asm:bounded,asm:sparse-weak}, the OLS slope $\hat\beta_{\mathrm{OLS}}^{(n)}$ from regressing $Y_i$ on $H_i$ over the sampled units $\{i : S_i = 1\}$ satisfies
    \[
        \hat\beta_{\mathrm{OLS}}^{(n)} \toprob \sum_i b_i \cdot \frac{\var(H_i \mid S_i = 1)}{\sum_j \var(H_j \mid S_j = 1)}.
    \]
    It is consistent for GATE if and only if the empirical covariance between $b_i$ and $\var(H_i \mid S_i = 1)$ vanishes in the limit. Exposure variances $\var(H_i \mid S_i = 1)$ depend on the graph and are increasing in the weight concentration $\sum_j w_{ij}^2$.
\label{thm:ols}
\end{theorem}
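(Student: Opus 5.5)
The plan is to write the OLS slope as a ratio of empirical moments over the sampled units and apply laws of large numbers to numerator and denominator separately. Let $n_1 = \sum_i S_i$. The slope is
\[
\hat\beta_{\mathrm{OLS}}^{(n)} = \frac{\frac1n\sum_i S_i (H_i - \bar H_1)(Y_i - \bar Y_1)}{\frac1n \sum_i S_i (H_i - \bar H_1)^2},
\]
where $\bar H_1, \bar Y_1$ are the sample means over $\{S_i = 1\}$. Substituting $Y_i = a_i + b_i H_i$ from \Cref{asm:linearity}, the numerator becomes $\frac1n\sum_i S_i (H_i-\bar H_1)(a_i + b_i H_i)$.

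The first step uses the design to compute conditional moments. Conditional on $S_i = 1$, the exposure $H_i$ has the same distribution for every node up to its weights. With probability $1/2$ we have $D_i = 1$, and then each incident edge is treated unless the neighbor is a sampled control node and $U_{ij} = 0$. A symmetric statement holds for $D_i = 0$. Writing $q = \PP(U_{ij} = 1)$, this gives $\E[H_i \mid S_i = 1]$ equal to a common constant $\mu = \tfrac12(1 - p(1-q)) + \tfrac12 pq$, independent of $i$, because $\sum_j w_{ij} = 1$. This is the same mechanism as the uniform compliance mentioned in the introduction.

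The variance follows from the law of total variance given $D_i$. Given $D_i$, the $Z_{ij}$ are independent Bernoullis with a common success probability $r_{D_i}$. Hence
\[
\var(H_i \mid S_i=1) = \tfrac14 (r_1 - r_0)^2 + c \sum_j w_{ij}^2
\]
with $c = \tfrac12[r_1(1-r_1) + r_0(1-r_0)] \ge 0$. This establishes the final claim that the variance is increasing in weight concentration.

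Because $a_i$ is fixed, the term $\frac1n \sum_i S_i (H_i - \bar H_1) a_i$ has mean approximately $\frac1n\sum_i p\, a_i(\mu - \mu) = 0$. Likewise, $\frac1n\sum_i S_i b_i (H_i - \bar H_1) H_i$ has mean approximately $\frac{p}{n}\sum_i b_i \var(H_i \mid S_i = 1)$. The denominator has mean approximately $\frac pn \sum_i \var(H_i\mid S_i=1)$.

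The second step, which I expect to be the main technical obstacle, is concentration. Each summand $S_i f_i(H_i)$, with $f_i$ bounded by \Cref{asm:bounded} and $H_i \in [0,1]$, depends only on the randomness in the edges incident to $i$. So two summands are dependent only when $j \in N_2(i)$, and the number of such pairs is at most $n \dmax^2$. The variance of each normalized average is therefore $O(\dmax^2/n) = o(1)$ by \Cref{asm:sparse-weak}, and Chebyshev's inequality gives convergence in probability. I would invoke \Cref{lma:local-average} for exactly this step.

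Replacing $\bar H_1$ by $\mu$ costs $o_p(1)$, since $\bar H_1 \toprob \mu$ by the same argument and $n_1/n \toprob p$. One care point is that $b_i$ is bounded only through $|b_i| < 2M$. Another is that bounded summands must be handled so that the cross terms in $(H_i - \bar H_1)$ factor cleanly.

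The third step is the ratio. The denominator limit is bounded away from zero because each variance is at least $\tfrac14(r_1-r_0)^2 > 0$ when $p<1$ or $q$ is nondegenerate, and is at least $c/\dmax$ otherwise. The continuous mapping theorem then gives the stated limit, interpreted as the difference between the ratio and the $n$-indexed weighted average tending to zero in probability.

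For the if-and-only-if, write $v_i = \var(H_i\mid S_i=1)$. The weighted average satisfies
\[
\frac{\sum_i b_i v_i}{\sum_i v_i} - \bar b = \frac{\frac1n\sum_i (b_i - \bar b)(v_i - \bar v)}{\bar v}.
\]
Since $\bar v$ is bounded above and bounded away from zero, this difference vanishes exactly when the empirical covariance vanishes.
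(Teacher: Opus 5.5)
Your proposal is correct and follows essentially the same route as the paper. Like the paper, you write the slope as a ratio of sample moments over sampled units, use the common conditional mean $\mu$ (uniform compliance) so the $a_i$ term drops out, concentrate the local averages via the $O(\dmax^2/n)$ variance bound, combine with Slutsky, and get the law-of-total-variance decomposition of \Cref{eq:vi-decomp} together with the same covariance identity for the if-and-only-if.

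One small slip: a sampled control (or test) neighbor occurs with probability $p/2$, not $p$, so $r_1 = 1 - \frac{p}{2}(1-q)$, $r_0 = \frac{p}{2}q$, and $r_1 - r_0 = 1 - p/2 \geqslant 1/2$ for every $p$ and $q$. This makes the bound $\frac{1}{n}\sum_i v_i \geqslant 1/16$ immediate, so your case split is unnecessary; its $c/\dmax$ fallback would not by itself keep the denominator away from zero as $\dmax \to \infty$.
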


We prove this in \Cref{app:proof-ols}. The OLS slope thus converges to a variance-weighted average of the node-level slopes $b_i$ rather than to GATE. This weighting may be undesirable, as it places lower weight on high-degree nodes, which are expected to carry higher business importance, as we observe in our empirical study (\Cref{sec:simulations}).

We address this with an instrumental variables strategy. For each node $i$, the assignment $D_i$ serves as an instrumental variable for its exposure: it is randomized and, by \Cref{asm:locality,asm:summary}, affects the outcome $Y_i$ only through the true exposure $H_i$. We therefore propose to estimate GATE using the Wald estimator built from the researcher's exposure $\tilde H_i$,
\begin{equation}
    \thetahat \coloneq \frac{\DeltaY^{(n)}}{\DeltaHt^{(n)}} = \frac{\frac{1}{N_1} \sum_i Y_i S_i D_i - \frac{1}{N_0} \sum_i Y_i S_i (1 - D_i)}{\frac{1}{N_1} \sum_i \tilde H_i S_i D_i - \frac{1}{N_0} \sum_i \tilde H_i S_i (1 - D_i)},
\label{eq:wald}
\end{equation}
where $N_1 = \sum_i S_i D_i$ and $N_0 = \sum_i S_i (1 - D_i)$ are the number of sampled units assigned to test and control, respectively. The numerator $\DeltaY^{(n)}$ and the denominator $\DeltaHt^{(n)}$ can be interpreted as differences in means for the outcome and exposure, respectively. Because $D_i$ is binary, the Wald estimator coincides with the two-stage least squares (2SLS) estimator that regresses $Y_i$ on $\tilde H_i$ using $D_i$ as an instrument, on the subsample of nodes with $S_i = 1$. In IV terminology, $\DeltaHt$ is the first-stage estimate, capturing how strongly the treatment assignment $D_i$ shifts the exposure $\tilde H_i$.

The denominator acts as a dilution factor. Since both control and test users have a mixed experience, the outcome difference $\DeltaY^{(n)}$ alone understates GATE. Dividing by the exposure difference rescales it to account for this dilution.
\section{Consistency}
\label{sec:consistency}

The following lemma is central to the consistency argument.

\begin{lemma}[Uniform compliance with instrumental variable]
    Under \Cref{asm:random,asm:exogenous-graph}, for each $d \in \{0, 1\}$ the conditional mean exposure $\E[H_i \mid S_i = 1, D_i = d] = \E[\tilde H_i \mid S_i = 1, D_i = d] = \zeta(d)$ is the same for all units $i$, for both the true exposure $H_i$ and any misspecified exposure $\tilde H_i$. Consequently the compliance $\zeta(1) - \zeta(0)$ is constant across $i$.
\label{lma:uniform-compliance}
\end{lemma}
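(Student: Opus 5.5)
The plan is to compute the conditional expectation of a single edge treatment $Z_{ij}$ given $S_i = 1, D_i = d$, show that it does not depend on the edge $ij$ (nor on $i$), and then use linearity of expectation together with the normalization $\sum_j \tilde w_{ij} = 1$. Since $\tilde H_i = \sum_j \tilde w_{ij} Z_{ij}$ with deterministic weights (\Cref{asm:exogenous-graph}), we have
\[
    \E[\tilde H_i \mid S_i = 1, D_i = d] = \sum_j \tilde w_{ij}\, \E[Z_{ij} \mid S_i = 1, D_i = d].
\]
If every conditional edge mean equals a common value $\zeta(d)$, the right-hand side is $\zeta(d) \sum_j \tilde w_{ij} = \zeta(d)$. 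This holds for any admissible weighting, and in particular for $\tilde w = w$, which gives the claim for $H_i$.

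The key step is the edge-level computation. Fix a neighbor $j$ of $i$ and condition on $S_i = 1, D_i = d$. By \Cref{asm:random}, the variables $S_j \sim \Bern(p)$, $D_j \sim \Bern(0.5)$ and $U_{ij}$ are independent of $(S_i, D_i)$ and of each other. Reading off \eqref{eq:assignment_rule} with $S_i = 1$ leaves three cases. If $S_j = 0$, then $Z_{ij} = d$. If $S_j = 1$ and $D_j = d$, then $Z_{ij} = d$. If $S_j = 1$ and $D_j \neq d$, then $Z_{ij} = U_{ij}$. Hence
\[
    \E[Z_{ij} \mid S_i = 1, D_i = d] = \Bigl(1 - \tfrac{p}{2}\Bigr) d + \tfrac{p}{2}\, \mu, \qquad \mu \coloneq \E[U_{ij}],
\]
where $\mu$ is common to all edges because the $U_{ij}$ are identically distributed. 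Independence of $U_{ij}$ from $(S_i, D_i, S_j, D_j)$ lets us replace the conditional mean of $U_{ij}$ by $\mu$. The resulting quantity depends only on $d$, $p$ and $\mu$, so we define $\zeta(d) \coloneq (1 - p/2) d + p\mu/2$. The compliance is then $\zeta(1) - \zeta(0) = 1 - p/2$, the same for every unit.

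There is no real obstacle here. The only point needing care is the independence bookkeeping: $Z_{ij}$ is a function of $(S_i, D_i, S_j, D_j, U_{ij})$ only, and $j \neq i$ since there are no self-loops. Conditioning on unit $i$'s own assignment therefore leaves the joint law of $(S_j, D_j, U_{ij})$ unchanged. We also use the standing assumption that every node has at least one edge, so the normalization $\sum_j \tilde w_{ij} = 1$ is meaningful.
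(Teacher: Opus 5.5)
Your proof is correct and is essentially the paper's argument: linearity of expectation over $\tilde H_i = \sum_j \tilde w_{ij} Z_{ij}$, a conditional edge mean $\E[Z_{ij} \mid S_i = 1, D_i = d]$ that is common to all edges by \Cref{asm:random}, and the normalization $\sum_j \tilde w_{ij} = 1$. The only difference is that you carry out the explicit case analysis giving $\zeta(d) = (1 - p/2)d + p\mu/2$ inside the proof, whereas the paper argues that the edge mean is common abstractly and records this closed form as a remark after the lemma.
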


\begin{proof}
    Fix any treatment value $d \in \{0, 1\}$ and let $\tilde H_i = \sum_j \tilde w_{ij} Z_{ij}$ be the exposure under any unit-sum weighting, so $\sum_j \tilde w_{ij} = 1$. The average exposure intensity is
    \begin{align*}
        \E[\tilde H_i \mid S_i = 1, D_i = d] = \E\left[\sum_j \tilde w_{ij} Z_{ij} \;\middle|\; S_i = 1, D_i = d\right] &= \sum_j \tilde w_{ij} \E[Z_{ij} \mid S_i = 1, D_i = d].
    \end{align*}
    Conditional on the node being sampled and on the treatment status $D_i = d$, the treatment for edge $ij$ only depends on $S_j$, $D_j$, and $U_{ij}$. By \Cref{asm:random}, the value $\E[Z_{ij} \mid S_i = 1, D_i = d]$ does not depend on $(i, j)$, and we denote this common value by $\zeta(d)$. Therefore,
    \begin{align*}
        \E[\tilde H_i \mid S_i = 1, D_i = d] = \sum_j \tilde w_{ij}\, \zeta(d) = \zeta(d),
    \end{align*}
    where the second equality uses $\sum_j \tilde w_{ij} = 1$. The compliance is then the difference of two such constants and is also constant across $i$.
\end{proof}

For the particular assignment rule specified in \Cref{eq:assignment_rule}, the average exposure equals $\zeta(d) = (1-p/2) d + pu/2$, where $u = \E[U_{ij}]$ is the expected treatment assigned to a conflicting edge. The compliance then equals $\zeta(1) - \zeta(0) = 1 - p/2$, independent of $u$.

\begin{theorem}[Consistency]
    Under \Cref{asm:random,asm:exogenous-graph,asm:locality,asm:summary,asm:linearity,asm:bounded,asm:sparse-weak}, and for any proxy weighting $\tilde w$, the estimator $\hat\theta$ is consistent for $\theta$, i.e., $\thetahat - \thetan \toprob 0$ as $n \to \infty$. Along the way, we also have $\DeltaHt^{(n)} - (\zeta(1) - \zeta(0)) \toprob 0$.
\label{thm:consistency}
\end{theorem}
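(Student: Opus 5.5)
The plan is to handle the numerator and denominator of \Cref{eq:wald} separately and combine them with the continuous mapping theorem. The only dependence-sensitive ingredient is a weak law of large numbers for sums of the form $\frac{1}{n}\sum_i X_i$. Here each $X_i$ is a bounded function of $(S_i, D_i)$ and of the variables $(S_j, D_j, U_{ij})$ for $j \in N_1(i)$. Two such summands are independent unless $j \in N_2(i)$, so the first step is to bound
\[
    \var\Bigl(\tfrac{1}{n}\textstyle\sum_i X_i\Bigr) \leqslant \tfrac{C}{n^2} \sum_i |N_2(i)| \leqslant \tfrac{C(1+\dmax^2)}{n} \to 0
\]
by \Cref{asm:sparse-weak}. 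Chebyshev then gives $\frac{1}{n}\sum_i (X_i - \E X_i) \toprob 0$. This is the content of \Cref{lma:local-average}, which I would invoke (or prove in this form).

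\textbf{Denominator.} I would write
\[
    \DeltaHt = \frac{\frac1n\sum_i \tilde H_i S_i D_i}{\frac1n N_1} - \frac{\frac1n\sum_i \tilde H_i S_i(1-D_i)}{\frac1n N_0}.
\]
By the ordinary WLLN (or the local-average bound), $N_1/n \toprob p/2$ and $N_0/n \toprob p/2$. By \Cref{lma:uniform-compliance},
\[
    \E[\tilde H_i S_i D_i] = \tfrac p2 \zeta(1), \qquad \E[\tilde H_i S_i (1-D_i)] = \tfrac p2 \zeta(0),
\]
uniformly in $i$. Since $\tilde H_i \in [0,1]$, the local-average bound applies, so the numerators converge to these constants. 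Slutsky then gives $\DeltaHt - (\zeta(1)-\zeta(0)) \toprob 0$, with the limit equal to $1 - p/2 > 0$.

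\textbf{Numerator.} By \Cref{asm:linearity}, $Y_i = a_i + b_i H_i$. The $a_i$ terms give
\[
    \tfrac{1}{N_1}\sum_i a_i S_iD_i - \tfrac{1}{N_0}\sum_i a_i S_i(1-D_i).
\]
Each average concentrates at $\abarn$, using $|a_i|<M$ and independence across $i$, so this difference is $o_p(1)$. For the $b_i H_i$ terms, note first that $|b_i| < 2M$ by \Cref{asm:bounded}. Applying \Cref{lma:uniform-compliance} with the true weights gives
\[
    \E[b_i H_i S_i D_i] = b_i \tfrac p2 \zeta(1).
\]
The local-average bound, with summands bounded by $2M$, then yields
\[
    \tfrac1n\sum_i b_i H_i S_i D_i - \tfrac p2 \zeta(1)\bar b \toprob 0,
\]
and similarly for control with $\zeta(0)$. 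Dividing by $N_1/n$ and $N_0/n$ gives $\DeltaY - \bar b(\zeta(1)-\zeta(0)) \toprob 0$. The important point is that the $b_i$ are not averaged against node-specific compliance, because compliance is uniform. Since $\bar b^{(n)}$ may not converge, I would phrase everything as differences tending to zero; boundedness of $\bar b$ by $2M$ suffices.

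\textbf{Combination.} We have
\[
    \thetahat - \thetan = \frac{\DeltaY - \bar b\,\DeltaHt}{\DeltaHt}.
\]
The numerator equals $[\DeltaY - \bar b(\zeta(1)-\zeta(0))] - \bar b\,[\DeltaHt - (\zeta(1)-\zeta(0))]$, which is $o_p(1)$ because $\bar b$ is bounded. The denominator converges in probability to $1-p/2$, which is bounded away from zero, so the ratio is $o_p(1)$.

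The main obstacle is the dependent-sum variance bound. One must check carefully that $X_i$ and $X_k$ are independent whenever $\dist(i,k) > 2$, including for edge tie-breaks $U_{ij}$ shared between neighbors, and that the count $\sum_i |N_2(i)| \leqslant n(1+\dmax^2)$ is $o(n^2)$. The weak sparsity assumption is exactly what makes this work.
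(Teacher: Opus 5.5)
Your proof is correct and takes essentially the paper's route: the local-dependence variance bound of \Cref{lma:local-average}, writing each arm mean as a ratio of averages with $N_1/n, N_0/n \toprob p/2$, Slutsky's theorem, and uniform compliance (\Cref{lma:uniform-compliance}) to collapse the compliance weights to $\bar b$. The only difference is cosmetic: you substitute $Y_i = a_i + b_i H_i$ before concentrating and keep the bounded, possibly non-convergent $\bar b^{(n)}$ inside vanishing differences, which is slightly more careful than the paper's ``ratio of the limits'' step, since those limits depend on $n$.
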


The proof is deferred to \Cref{app:proof-consistency}. There we show that $\thetahat$ converges in probability to
\begin{align*}
    &~ \frac{\frac{1}{n} \sum_i \E[Y_i \mid S_i = 1, D_i = 1] - \frac{1}{n} \sum_i \E[Y_i \mid S_i = 1, D_i = 0]}{\frac{1}{n} \sum_i \E[\tilde H_i \mid S_i = 1, D_i = 1] - \frac{1}{n} \sum_i \E[\tilde H_i \mid S_i = 1, D_i = 0]} \\
    =&~ \frac{\frac{1}{n} \sum_i (a_i + b_i \E[H_i \mid S_i = 1, D_i = 1]) - \frac{1}{n} \sum_i (a_i + b_i \E[H_i \mid S_i = 1, D_i = 0])}{\frac{1}{n} \sum_i \E[\tilde H_i \mid S_i = 1, D_i = 1] - \frac{1}{n} \sum_i \E[\tilde H_i \mid S_i = 1, D_i = 0]} \\
    =&~ \sum_i b_i \cdot \frac{\E[H_i \mid S_i = 1, D_i = 1] - \E[H_i \mid S_i = 1, D_i = 0]}{\sum_j (\E[\tilde H_j \mid S_j = 1, D_j = 1] - \E[\tilde H_j \mid S_j = 1, D_j = 0])} \\
    =&~ \sum_i b_i \cdot \frac{\zeta(1) - \zeta(0)}{\sum_j (\zeta(1) - \zeta(0))} = \bar{b} = \thetan.
\end{align*}
In the correctly specified case $\tilde w = w$, the third line expresses the limit as a weighted average of $b_i$ with weights proportional to each unit's compliance $\E[H_i \mid S_i = 1, D_i = 1] - \E[H_i \mid S_i = 1, D_i = 0]$. This is consistent with the classical IV literature on non-binary endogenous treatments. \citet{angrist1995twostage} shows that, in general, the IV estimator converges to a weighted average of marginal causal responses, with weights proportional to the responsiveness of the treatment to the instrument. \citet{angrist2000interpretation} extends this to a continuous endogenous treatment, where the limit also averages each unit's marginal effect over the treatment values its instrument-induced shift covers. Linearity (\Cref{asm:linearity}) collapses this within-unit averaging by making the marginal effect a single slope $b_i$.

\Cref{lma:uniform-compliance} shuts down the remaining source of heterogeneity by forcing the compliance differences to a common value, which makes the weights uniform and collapses the weighted average in the last equality to a simple sample mean. The uniformity of compliance is a structural consequence of the assignment rule rather than an assumption on behavior: the mapping from the instrument $D_i$ to the exposure $H_i$ is determined by the platform, not by endogenous take-up decisions of users. This is what allows us to derive a stronger identification result than in the classical IV literature.

This is also why the choice of weights does not affect consistency. Our design assigns the same expected treatment to every edge of a node, $\E[Z_{ij} \mid S_i = 1, D_i = d] = \zeta(d)$, independent of the edge $ij$. Any unit-sum weighting averages these identical edge means back to $\zeta(d)$, so the first stage $\DeltaHt^{(n)}$ has the same limit regardless of $\tilde w$. While we could use the closed-form compliance $\zeta(1) - \zeta(0) = 1 - p/2$ directly in the estimator, it can be harder to derive in more complex settings relevant to online platforms, as in \Cref{sec:holdout,sec:multi-arm,sec:experiment-conflicts}. We therefore estimate the compliance from the data through the first stage $\DeltaHt^{(n)}$. We discuss this choice in \Cref{sec:denominator}.

Finally, we characterize what the Wald estimator targets if the dose-response relationship is nonlinear. Suppose that $Y_i=y_i(H_i)$ for a fixed response function $y_i:[0,1]\to\mathbb{R}$. For a sampled node, define $H_i(d)$ by setting $D_i=d$ in the assignment rule while leaving all other assignments and edge tie-breaks random. The next result adapts the IV weighting results of \citet{angrist2000interpretation} to our design.

\begin{theorem}[Wald limit under nonlinear dose-response]
    Suppose the response functions $y_i$ are uniformly bounded and absolutely continuous, with uniformly bounded derivatives. Under \Cref{asm:random,asm:exogenous-graph,asm:locality,asm:summary,asm:sparse-weak}, for any proxy weighting $\tilde w$, the Wald estimator converges to the average causal response (ACR), defined as
    \begin{equation}
        \thetahat-\theta_{\mathrm{ACR}}^{(n)}\toprob0,
        \qquad
        \theta_{\mathrm{ACR}}^{(n)}
        \coloneq
        \frac{1}{n}\sum_i\int_0^1 y_i'(h)\omega_i(h)\,dh,
    \label{eq:nonlinear-acr}
    \end{equation}
    where
    \begin{equation}
        \omega_i(h)
        \coloneq
        \frac{
            \PP\big(H_i(0)<h\leqslant H_i(1)\mid S_i=1\big)
        }{\zeta(1)-\zeta(0)}
        \geqslant0,
        \qquad
        \int_0^1\omega_i(h)\,dh=1.
    \label{eq:nonlinear-weight}
    \end{equation}
\label{thm:nonlinear-weighting}
\end{theorem}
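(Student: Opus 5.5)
The plan is to follow the consistency argument behind \Cref{thm:consistency}, but to replace the linear numerator with a coupling identity. \Cref{thm:consistency} already gives $\DeltaHt^{(n)} \toprob \zeta(1)-\zeta(0) = 1-p/2 > 0$, and that argument used only \Cref{asm:random,asm:exogenous-graph,asm:sparse-weak}, so it carries over unchanged. For the numerator, we apply the same local-dependence law of large numbers (\Cref{lma:local-average}) to $Y_i = y_i(H_i)$ in place of $a_i + b_i H_i$. The only inputs that lemma needs are uniform boundedness of $Y_i$, which is assumed for the $y_i$, and the fact that $Y_i$ is a function of the edge variables incident to $i$. Hence $Y_i$ and $Y_j$ are independent unless $\dist(i,j)\leqslant 2$, exactly as for $\tilde H_i$. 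The normalizations $N_1/n \toprob p/2$ and $N_0/n \toprob p/2$ are handled as before. This gives
\[
\DeltaY^{(n)} - \frac1n\sum_i\big(\E[y_i(H_i)\mid S_i=1,D_i=1]-\E[y_i(H_i)\mid S_i=1,D_i=0]\big)\toprob 0 .
\]
By Slutsky's theorem and the positive limit of the denominator, it then remains to show that each summand equals $(\zeta(1)-\zeta(0))\int_0^1 y_i'(h)\,\omega_i(h)\,dh$.

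For that identity, we use the coupling in the definition of $H_i(d)$. We realize $H_i(0)$ and $H_i(1)$ on the same draw of $(S_j,D_j,U_{ij})_{j\neq i}$, with only $D_i$ flipped. Conditional on $S_i=1$, independence (\Cref{asm:random}) gives that $H_i(d)$ has the law of $H_i$ given $S_i=1,D_i=d$. Next we check monotonicity edge by edge: $Z_{ij}(1)\geqslant Z_{ij}(0)$ for every $j$. If $S_j=0$, then $Z_{ij}(d)=d$. If $S_j=1$ and $D_j=1$, then $Z_{ij}(1)=1\geqslant Z_{ij}(0)=U_{ij}$. If $S_j=1$ and $D_j=0$, then $Z_{ij}(1)=U_{ij}\geqslant 0=Z_{ij}(0)$. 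Since the weights are nonnegative, $H_i(0)\leqslant H_i(1)$ pointwise.

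Absolute continuity of $y_i$ then gives, pointwise,
\[
y_i(H_i(1))-y_i(H_i(0))=\int_0^1 y_i'(h)\,\mathds 1\{H_i(0)<h\leqslant H_i(1)\}\,dh .
\]
We take expectations given $S_i=1$ and use Fubini, which is justified by the bounded derivative. This yields $\int_0^1 y_i'(h)\,\PP(H_i(0)<h\leqslant H_i(1)\mid S_i=1)\,dh$, which is the claimed form after dividing by $\zeta(1)-\zeta(0)$.

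For the properties of the weights in \Cref{eq:nonlinear-weight}: $\omega_i\geqslant 0$ holds trivially. For the normalization, Fubini gives $\int_0^1 \PP(H_i(0)<h\leqslant H_i(1)\mid S_i=1)\,dh=\E[H_i(1)-H_i(0)\mid S_i=1]$. Applying \Cref{lma:uniform-compliance} to the true weights $w$, this equals $\zeta(1)-\zeta(0)$. Note that the numerator involves the true exposure $H_i$, while the denominator's limit is the same constant for any $\tilde w$. This is why the proxy weighting does not change the limit.

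The main obstacle I expect is checking that the law of large numbers step really goes through with nonlinear $y_i$. In particular, the variance bound in \Cref{lma:local-average} should rely only on the dependency-graph structure and uniform boundedness, and not on the linear form. If that lemma is stated only for linear combinations, I would reprove the variance bound directly: a sum over pairs with $\dist(i,j)\leqslant 2$, at most $n\,\dmax^2$ of them, each with covariance $O(1)$, gives variance $O(\dmax^2/n)=o(1)$ under \Cref{asm:sparse-weak}. The indicator factors $S_iD_i$ do not change this dependency structure.
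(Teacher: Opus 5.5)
Your proposal is correct and follows the paper's proof essentially step for step. You reuse the consistency argument, with the local-dependence law of large numbers, for both the numerator and the denominator. You then use the edge-by-edge monotonicity $Z_{ij}(1)\geqslant Z_{ij}(0)$, the absolute-continuity integral identity with Fubini, and \Cref{lma:uniform-compliance} for the normalization of $\omega_i$. Your explicit check that the law-of-large-numbers step relies only on boundedness and two-hop dependence, not on linearity, is a helpful clarification that the paper leaves implicit.
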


The proof is given in \Cref{app:proof-nonlinear-weighting}. Every user still receives the same overall weight, while the concentration of its edge weights determines which part of its response curve is emphasized. The weights $\omega_i(h)$ measure how often changing a node's assignment moves its exposure across each level $h$. \Cref{fig:nonlinear-weighting} illustrates these weights. For example, for users with many similarly weighted connections, $H_i(0)$ and $H_i(1)$ concentrate around $p/4$ and $1-p/4$, respectively, so exposure levels near zero or one receive little weight. Consequently, with nonlinear GATE defined as $\theta_{\mathrm{GATE}}^{(n)}\coloneq n^{-1}\sum_i\big(y_i(1)-y_i(0)\big)$, Wald overstates GATE when responses are stronger in the central exposure range than near zero and one, and understates GATE when responses are stronger in the tails.

\section{Variance estimation and asymptotic normality}
\label{sec:normality}

In this section, we establish the limiting distribution of the error $\thetahat - \thetan$ as the population size $n \to \infty$.

\subsection{Asymptotic normality}
\label{sec:asymptotic-normality}

To analyze the asymptotic distribution of this ratio, we linearize the estimation error $\thetahat - \thetan$. We can rewrite the error as
\begin{equation}
    \thetahat - \thetan = \frac{\DeltaY^{(n)} - \thetan \DeltaHt^{(n)}}{\DeltaHt^{(n)}}.
\label{eq:linear}
\end{equation}

From \Cref{thm:consistency}, we have $\DeltaHt^{(n)} - (\zeta(1) - \zeta(0)) \toprob 0$. So by Slutsky's theorem, it suffices to determine the asymptotic distribution of the numerator. Because the numerator is centered, the fluctuations of the denominator around its limit do not affect the first-order variance.

\begin{lemma}
    Under \Cref{asm:random,asm:exogenous-graph,asm:locality,asm:summary,asm:linearity,asm:bounded,asm:sparse-weak}, $\DeltaY^{(n)} - \thetan \DeltaHt^{(n)}$ is asymptotically linear. Specifically,
    \[
        \DeltaY^{(n)} - \thetan \DeltaHt^{(n)} = \frac{1}{n} \sum_i \psi_i + o_p(n^{-1/2}),
    \]
    where $\psi_i = \frac{2}{p} S_i (2D_i - 1)(Y_i - \thetan \tilde H_i - \abarn)$.
\label{lma:linearize}
\end{lemma}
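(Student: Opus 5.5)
The plan is to write the numerator as a difference of two sample means with random denominators $N_1,N_0$, and replace those denominators by their deterministic counterparts $np/2$. Set $R_i \coloneq Y_i - \thetan \tilde H_i$. Then
\[
    \DeltaY^{(n)} - \thetan \DeltaHt^{(n)} = \frac{1}{N_1}\sum_i R_i S_i D_i - \frac{1}{N_0}\sum_i R_i S_i(1-D_i).
\]
Because each difference in means is invariant to adding a constant to every $R_i$, we may replace $R_i$ by $R_i - c$ for any constant $c$. We choose $c = \mu \coloneq \abarn$. The key identity behind this choice is
\[
    \frac{1}{n}\sum_i \E[R_i \mid S_i=1, D_i=d] = \abarn + \frac{1}{n}\sum_i b_i\zeta(d) - \thetan\zeta(d) = \abarn
\]
for both $d\in\{0,1\}$. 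It follows from \Cref{asm:linearity} and \Cref{lma:uniform-compliance}, since both $H_i$ and $\tilde H_i$ have conditional mean $\zeta(d)$, together with $\thetan = \bar b$. Thus the centered quantities $R_i - \abarn$ have zero average conditional mean in each arm. This centering is exactly what kills the first-order contribution of the random denominators.

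We then write $\frac{1}{N_1} = \frac{2}{np} + \left(\frac{1}{N_1} - \frac{2}{np}\right)$, and likewise for $N_0$. The leading term gives
\[
    \frac{2}{np}\sum_i (R_i - \abarn)S_i\big(D_i - (1-D_i)\big) = \frac{1}{n}\sum_i \psi_i .
\]
The remainder for arm $1$ is
\[
    \left(\frac{np}{2N_1} - 1\right)\cdot \frac{2}{np}\sum_i (R_i-\abarn)S_iD_i,
\]
and analogously for arm $0$. Since $N_1$ is $\mathrm{Binomial}(n,p/2)$, we have $\frac{np}{2N_1}-1 = O_p(n^{-1/2})$. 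It therefore suffices to show
\[
    A_1 \coloneq \frac{1}{n}\sum_i (R_i-\abarn)S_iD_i = o_p(1),
\]
and similarly $A_0 = o_p(1)$.

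For $A_1$, the mean is $\frac{p}{2}\cdot\frac{1}{n}\sum_i\big(\E[R_i\mid S_i=1,D_i=1]-\abarn\big) = 0$ by the identity above. The variance is $O(\dmax^2/n)$, which is $o(1)$ under \Cref{asm:sparse-weak}. This holds because \Cref{asm:bounded} together with $\tilde H_i\in[0,1]$ gives $|R_i| \le M + |\thetan| \le 3M$, so each summand is bounded. Moreover, summands $i$ and $j$ are independent unless $j\in N_2(i)$, and $|N_2(i)|\le 1+\dmax^2$. This is essentially the local law-of-large-numbers bound of \Cref{lma:local-average}, which we would invoke directly. Chebyshev then gives $A_1 = o_p(1)$, so the remainder is $O_p(n^{-1/2})\,o_p(1) = o_p(n^{-1/2})$.

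The main obstacle is making sure the centering constant is right. If we had centered at the unconditional mean of $R_i$, or had not centered at all, the remainder would involve the nonzero term $\frac{1}{n}\sum_i \E[R_i]$ multiplied by the $O_p(n^{-1/2})$ fluctuation of $N_1$. That product is not negligible, and it is exactly why $\abarn$ appears in $\psi_i$. The identity is exact only because compliance is uniform, a consequence of \Cref{lma:uniform-compliance}, and because the estimand is $\thetan=\bar b$. A secondary point is that $R_i$ depends on the neighbors' assignments as well as on $D_i$, so the conditional means must be taken with neighbors random, and the dependency count must cover $2$-hop neighborhoods, not just direct edges.
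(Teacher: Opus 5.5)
Your proof is correct and is essentially the paper's argument. You center at $\abarn$ so that uniform compliance (\Cref{lma:uniform-compliance}) makes each arm's centered numerator mean exactly zero, then bound the remainder as the product of the $O_p(n^{-1/2})$ fluctuation of $N_1/n$ and a centered local average whose variance is $O(\dmax^2/n)$. The only difference is cosmetic: your exact identity $\frac{1}{N_1} = \frac{2}{np}\bigl(1 + (\frac{np}{2N_1} - 1)\bigr)$ isolates the cross term directly, whereas the paper reaches the same cross term through a second-order Taylor expansion of $x/y$ on the event $\{\phatn > p/4\}$, bounding it by Cauchy--Schwarz with the slightly sharper rate $O(\dmax/n)$.
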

We prove this in \Cref{app:proof-linearize}. We define $\sigman^2$ as the asymptotic variance of the linearized numerator $\DeltaY^{(n)} - \thetan \DeltaHt^{(n)}$,
\[
    \sigman^2 \coloneq \var\left(\frac{1}{\sqrt{n}} \sum_i \psi_i\right).
\]
In the particular case of no treatment effects, $b_i = 0$ for all $i$ and $\thetan = 0$, so that $Y_i - \thetan \tilde H_i - \abarn$ reduces to $a_i - \abarn$. The influence terms are then independent across $i$, with $\var(\psi_i) = \frac{4}{p} (a_i - \abarn)^2$, and
\[
    \sigman^2 = \frac{4}{p} \cdot \frac{1}{n} \sum_i (a_i - \abarn)^2,
\]
which coincides with the asymptotic variance of the standard difference-in-means estimator on the test and control subsamples. Without treatment effects, all of the variance of the numerator comes from the dispersion of the baseline outcomes and does not depend on the graph.

For the linearization argument to apply, we also need to avoid superefficiency of the estimator. This excludes degenerate cases such as $Y_i = c$ for all $i$, in which case the estimator $\thetahat$ equals zero exactly, with a distribution that is a point mass at zero. We rule this out via the following assumption.

\begin{assumption}[No superefficiency]
    The asymptotic variance is bounded away from zero, that is, $\liminf_n \sigman^2 > 0$.
\label{asm:no-superefficient}
\end{assumption}

For the central limit argument, we also strengthen the graph sparsity condition in \Cref{asm:sparse-weak} as follows.

\begin{assumption}[Medium sparsity]
    The maximum degree is $\dmax = o(n^{1/4})$.
\label{asm:sparse-medium}
\end{assumption}

\begin{theorem}[Asymptotic normality]
    Under \Cref{asm:random,asm:exogenous-graph,asm:locality,asm:summary,asm:linearity,asm:bounded,asm:sparse-medium,asm:no-superefficient}, the estimation error converges in distribution
    \[
        \frac{\sqrt{n}(\thetahat - \thetan)}{\sigman / (\zeta(1) - \zeta(0))} \todist \Normal(0, 1).
    \]
\label{thm:normal}
\end{theorem}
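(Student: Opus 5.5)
We reduce the statement to a central limit theorem for the linearized numerator and then apply Slutsky. By \Cref{eq:linear},
\[
    \frac{\sqrt{n}(\thetahat - \thetan)}{\sigman/(\zeta(1)-\zeta(0))} = \frac{\zeta(1)-\zeta(0)}{\DeltaHt^{(n)}} \cdot \frac{\sqrt{n}\,(\DeltaY^{(n)} - \thetan \DeltaHt^{(n)})}{\sigman}.
\]
Since \Cref{asm:sparse-medium} implies \Cref{asm:sparse-weak}, \Cref{thm:consistency} gives $\DeltaHt^{(n)} \toprob \zeta(1)-\zeta(0) = 1-p/2 > 0$. The first factor therefore tends to one in probability. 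By \Cref{lma:linearize}, $\sqrt{n}(\DeltaY^{(n)} - \thetan\DeltaHt^{(n)}) = n^{-1/2}\sum_i \psi_i + o_p(1)$. \Cref{asm:no-superefficient} keeps $\sigman$ bounded away from zero, so dividing by $\sigman$ preserves the $o_p(1)$ remainder. It thus suffices to show $T_n \coloneq n^{-1/2}\sum_i \psi_i/\sigman \todist \Normal(0,1)$.

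First we check that the $\psi_i$ are centered. Write $Y_i - \thetan\tilde H_i - \abarn = (a_i-\abarn) + b_i H_i - \thetan \tilde H_i$. Conditioning on $S_i=1$ and $D_i$, and using \Cref{lma:uniform-compliance}, gives $\E[\psi_i] = (b_i - \thetan)(\zeta(1)-\zeta(0))$ up to the constant $2/p \cdot p/2 = 1$. This is not zero termwise, but it averages to zero because $\sum_i(b_i-\thetan)=0$. The sum $\sum_i \psi_i$ is therefore exactly centered, although the individual terms are not. Next we note that each $\psi_i$ is bounded: by \Cref{asm:bounded}, $|b_i| < 2M$ and $|\thetan| < 2M$, and $H_i, \tilde H_i \in [0,1]$, so $|\psi_i| \le C$ for a constant $C$ depending only on $M$ and $p$.

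The main step is a CLT for a sum of bounded, locally dependent variables. The variable $\psi_i$ is a function of $S_i, D_i$, and $(S_j, D_j, U_{ij})_{j \in N_1(i)}$. Hence $\psi_i$ and $\psi_k$ are independent whenever $N_1(i)\cap N_1(k)=\emptyset$, that is, whenever $\dist(i,k) > 2$. The dependency graph therefore has maximum degree $D_n \le \dmax + \dmax(\dmax-1) \le \dmax^2$. We apply Stein's method for dependency graphs (e.g.\ Ross's 2011 survey, or the Baldi--Rinott bound). Set $X_i = (\psi_i - \E\psi_i)/(\sqrt{n}\sigman)$; centering termwise does not change the sum. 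The bound gives
\[
    d_W(T_n, \Normal(0,1)) \lesssim \frac{D_n^2}{\sigman^3 n^{3/2}}\sum_i \E|\psi_i - \E\psi_i|^3 + \frac{D_n^{3/2}}{\sigman^2 n}\Big(\sum_i \E(\psi_i-\E\psi_i)^4\Big)^{1/2}.
\]
Boundedness makes each sum $O(n)$. The two terms are then $O(\dmax^4 n^{-1/2})$ and $O(\dmax^3 n^{-1/2})$, and both vanish exactly when $\dmax = o(n^{1/8})$.

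This is the main obstacle. The crude dependency-graph bound requires $\dmax = o(n^{1/8})$, while \Cref{asm:sparse-medium} only grants $o(n^{1/4})$. To close the gap we would exploit the finer structure: $\psi_i$ depends on neighbor $j$ only through the term $w_{ij}Z_{ij}$ with unit-sum weights, so the dependence between distance-two pairs is weak. We would decompose $\psi_i = \phi_i + r_i$, where $\phi_i$ collects the part driven by node $i$'s own $(S_i,D_i)$ times $(a_i-\abarn)$ and the conditional mean exposure. This part is independent across $i$. The remainder $r_i$ involves the centered fluctuations $\sum_j w_{ij}(Z_{ij} - \E[Z_{ij}\mid S_i,D_i])$. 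For the independent part $\phi_i$, Lyapunov's CLT applies directly. For the remainder, we would either show it is handled by a dependency-graph Stein bound that uses only first-order neighbors, with degree $\dmax$ and hence requiring $\dmax^2 = o(n^{1/2})$, matching \Cref{asm:sparse-medium}, or bound the cross-covariances and sum them with the counting argument of \Cref{lma:local-average}. Should this refinement fail, the fallback is to apply the Stein bound with $D_n \le \dmax^2$ together with the sharper third-moment form $D_n \sum_i \E|X_i|^3$ from Chen--Shao. That form gives $O(\dmax^2 n^{-1/2}) \to 0$ precisely under $\dmax = o(n^{1/4})$, and the fourth-moment term can be controlled in the same way. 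Once $T_n \todist \Normal(0,1)$ is established, Slutsky's theorem completes the proof.
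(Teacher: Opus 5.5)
Your reduction is the same as the paper's: linearize via \Cref{lma:linearize}, use $\DeltaHt^{(n)}\toprob 1-p/2$ and Slutsky, and note that $\sum_i\psi_i$ is centered only in aggregate. The central limit step, however, has a genuine gap. You located it but did not close it.

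With node-indexed summands the dependency graph has degree of order $\dmax^2$, so the Stein bound needs $\dmax=o(n^{1/8})$. Your fallback does not help, because no general dependency-graph bound of the form $C\,D\sum_i\E|X_i|^3$ (with $X_i$ normalized by $\sigma$) exists. To see this, take $n/D$ independent blocks, each made of $D$ copies of one Rademacher sign. The standardized sum is a normalized sum of $n/D$ i.i.d.\ signs, so its distance from $\Normal(0,1)$ is of order $\sqrt{D/n}$. Yet $D\sum_i\E|X_i|^3=(nD)^{-1/2}$. The factor $D^2$ is therefore unavoidable in general, and with $D\asymp\dmax^2$ you are back at $\dmax^4/\sqrt n$.

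The split $\psi_i=\phi_i+r_i$ fails for three reasons. First, $n^{-1/2}\sum_i r_i$ is not negligible in general; for a degree-one node with $b_i\neq\thetan$, $r_i$ is of order one. Second, $r_i$ is in general correlated with $\phi_j$ for neighbors $j$, since $Z_{ij}$ depends on $(S_j,D_j)$. Separate CLTs for the two pieces therefore do not give one for their sum. Third, $r_i$ and $r_k$ remain dependent whenever $i$ and $k$ share a neighbor. Any node-indexed decomposition thus keeps dependency degree of order $\dmax^2$, and the "degree $\dmax$" bound you mention is unattainable in that form.

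The missing idea, which is the paper's argument, is to re-index the sum by nodes \emph{and edges}. Substitute $Y_i=a_i+b_iH_i$ together with the definitions of $H_i$ and $\tilde H_i$. Then $\sum_i\psi_i=\sum_iN_i+\sum_{i<j}E_{ij}$, where
\[
    N_i=\tfrac{2}{p}S_i(2D_i-1)(a_i-\abarn),\qquad
    E_{ij}=\tfrac{2}{p}Z_{ij}\bigl[S_i(2D_i-1)(b_iw_{ij}-\thetan\tilde w_{ij})+S_j(2D_j-1)(b_jw_{ji}-\thetan\tilde w_{ji})\bigr].
\]
Here $N_i$ is a function of $(S_i,D_i)$ alone, and $E_{ij}$ is a function of $(S_i,D_i,S_j,D_j,U_{ij})$. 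A single Stein bound is applied to this whole collection, centered termwise; this is legitimate because the total is already centered. Two ingredients make it work.

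\begin{enumerate}
\item The dependency graph has maximum degree at most $2\dmax$: $E_{ij}$ interacts only with $N_i$, $N_j$, and edge terms incident to $i$ or $j$.
\item There may be order $n\dmax$ edge terms, but $|E_{ij}|\leqslant C(w_{ij}+\tilde w_{ij}+w_{ji}+\tilde w_{ji})$, with weights in $[0,1]$ and unit row sums. This gives $\sum_{i<j}\E|E_{ij}|^q=O(n)$ for $q=3,4$.
\end{enumerate}

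With $\sigman$ bounded below, the bound you quoted then becomes of order $\dmax^2/\sqrt n+\dmax^{3/2}/\sqrt n$. The paper's Baldi--Rinott version gives $\dmax/n^{1/4}$ and $\dmax^{3/4}/n^{1/4}$. Either way the bound vanishes exactly under \Cref{asm:sparse-medium}. Both ingredients are needed: if the edge terms were only bounded by a constant, the third-moment sum would be $O(n\dmax)$ and you would need $\dmax=o(n^{1/6})$.
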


The proof, which uses Stein's method for local dependence, is given in \Cref{app:proof-normal}. The variance $\sigman^2$ depends on the weighting $\tilde w$, unlike the probability limit $\thetan$. Misspecifying the weights therefore leaves the point estimate intact but can change the variance. In \Cref{app:proof-efficiency} we show that under homogeneous effects the true weights $\tilde w = w$ minimize the variance. Empirically, however, we find that the choice of weights has little effect on the variance (\Cref{sec:simulations}).

\subsection{Conservative variance estimator}
\label{sec:conservative-variance}

To conduct inference, we first construct a conservative estimator for $\sigman^2$. Because $\psi_i$ and $\psi_j$ are independent when $\dist(i, j) > 2$, we begin with a sample analogue that aggregates products within the dependency graph. Denoting $N \coloneq N_1 + N_0$, we estimate the unobserved mean baseline outcome $\bar{a}$ with
\[
    \hat{a} = \frac{1}{N}\sum_i S_i (Y_i - \hat\theta \tilde H_i).
\]
We can then define the empirical unscaled error $\hat{X}_i = Y_i - \hat\theta \tilde H_i - \hat{a}$, and the corresponding empirical linear terms
\[
    \hat\psi_i = \frac{2}{p}S_i(2D_i - 1)\hat{X}_i.
\]
The proposed variance estimator is the following quadratic form over pairs of nodes within two hops:
\begin{equation}
    \sigmanhat^2 = \frac{1}{n} \sum_i \sum_{j \in N_2(i)} \hat\psi_i \hat\psi_j.
\label{eq:variance-estimator}
\end{equation}
Because $S_i$ indicates whether node $i$ is sampled, $\hat\psi_i=0$ for every unsampled node, so only pairs of sampled nodes contribute to $\sigmanhat^2$. Computing the estimator nevertheless requires enough graph information to identify sampled nodes within two hops, including paths through unsampled intermediate nodes.

Bounding the sampling error of $\sigmanhat^2$ requires a stricter sparsity condition than \Cref{asm:sparse-medium}.

\begin{assumption}[Strong sparsity]
    The maximum degree is $\dmax = o(n^{1/6})$.
\label{asm:sparse-strong}
\end{assumption}

\begin{theorem}[Probability limit of the variance estimator]
    Under \Cref{asm:random,asm:exogenous-graph,asm:locality,asm:summary,asm:linearity,asm:bounded,asm:sparse-strong},
    \[
        \sigmanhat^2 - \sigman^2 - \Gamma_n \toprob 0,
    \]
    where
    \[
        \Gamma_n \coloneq \left(\zeta(1) - \zeta(0)\right)^2
        \frac{1}{n} \sum_i \sum_{j \in N_2(i)}
        (b_i - \bar b)(b_j - \bar b).
    \]
\label{thm:variance-limit}
\end{theorem}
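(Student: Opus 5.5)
We prove the theorem in three steps: replace the estimated quantities by their population counterparts, compute the expectation of the resulting oracle quadratic form, and then show that this form concentrates around its mean.

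\textbf{Step 1: reduction to an oracle form.} Define the oracle terms $\psi_i = \frac{2}{p}S_i(2D_i-1)X_i$ with $X_i = Y_i - \thetan\tilde H_i - \abarn$, as in \Cref{lma:linearize}, and the oracle form
\[
\tilde\sigma^2 \coloneq \frac1n\sum_i\sum_{j\in N_2(i)}\psi_i\psi_j .
\]
We have $\hat X_i - X_i = -(\thetahat-\thetan)\tilde H_i - (\hat a - \abarn)$. By \Cref{thm:consistency}, $\thetahat-\thetan = o_p(1)$. The same local law of large numbers (\Cref{lma:local-average}) applied to $\frac1N\sum_i S_i(Y_i-\thetan\tilde H_i)$ should give $\hat a - \abarn = o_p(1)$. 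To check this, note that the mean of $S_i(Y_i - \thetan \tilde H_i)$ is $p(\abarn + (\bar b - \thetan)\zeta\text{-average})$. The $b$-terms cancel in the average, because $\E[H_i\mid S_i=1]$ is the same constant for every $i$ by \Cref{lma:uniform-compliance}. Writing $\hat\psi_i = \psi_i + r_i$ with $r_i = \frac2pS_i(2D_i-1)(\hat X_i - X_i)$, the difference $\sigmanhat^2-\tilde\sigma^2$ consists of cross terms $\frac1n\sum_i\sum_{j\in N_2(i)}(\psi_ir_j + r_ir_j)$. Each such term has the form $\epsilon_n\cdot\frac1n\sum_i\sum_{j\in N_2(i)}\psi_i c_j$, where $\epsilon_n = o_p(1)$ is a scalar and $c_j$ is bounded. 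We bound $\epsilon_n$ using the rates from the linearization in \Cref{lma:linearize}, namely $O_p(n^{-1/2})$. A crude bound on the inner double sum is $O_p(\dmax^2)$, which is not enough on its own. Instead, we show the double sum is $O_p(\dmax^2/\sqrt n)+O(\text{mean})$: its mean is bounded, since $\E[\psi_i c_j]$ involves only $O(\dmax^2)$ neighbours with weights averaging out, and its variance is $O(\dmax^{4}\cdot\dmax^2/n)$ by the same local-dependence count as in Step 3. Under \Cref{asm:sparse-strong} this makes the product $o_p(1)$.

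\textbf{Step 2: expectation of the oracle form.} Since $\psi_i$ and $\psi_j$ are independent when $\dist(i,j)>2$, we have
\[
\sigman^2=\frac1n\sum_i\sum_{j\in N_2(i)}\cov(\psi_i,\psi_j).
\]
Therefore $\E\tilde\sigma^2 = \sigman^2 + \frac1n\sum_i\sum_{j\in N_2(i)}\E\psi_i\E\psi_j$, and it remains to compute $\E\psi_i$. We get
\[
\E\psi_i = \tfrac2p\cdot\tfrac p2\bigl(\E[X_i\mid S_i=1,D_i=1]-\E[X_i\mid S_i=1,D_i=0]\bigr).
\]
The $\abarn$ and $a_i$ terms cancel. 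By \Cref{lma:uniform-compliance} and linearity, the remainder is $b_i(\zeta(1)-\zeta(0)) - \thetan(\zeta(1)-\zeta(0)) = (\zeta(1)-\zeta(0))(b_i-\bar b)$. Summing $\E\psi_i\E\psi_j$ over $j\in N_2(i)$ gives exactly $\Gamma_n$. Hence $\E\tilde\sigma^2 = \sigman^2+\Gamma_n$.

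\textbf{Step 3: concentration of the oracle form (main obstacle).} We show $\var(\tilde\sigma^2)\to0$. Write $Q_i = \psi_i\sum_{j\in N_2(i)}\psi_j$. Each $Q_i$ is bounded by $O(\dmax^2)$ by \Cref{asm:bounded}, since $|N_2(i)|\le \dmax^2+1$. Moreover, $Q_i$ is a function of the primitive variables $(S_k,D_k,U_{kl})$ for $k$ within distance $3$ of $i$. Consequently $Q_i$ and $Q_{i'}$ are independent unless $\dist(i,i')\le 6$, and each $i$ has $O(\dmax^6)$ such partners. This gives
\[
\var(\tilde\sigma^2)\le \frac1{n^2}\sum_i\sum_{i':\dist(i,i')\le6}|\cov(Q_i,Q_{i'})| = O\!\left(\frac{n\,\dmax^6\,\dmax^4}{n^2}\right).
\]
This crude bound needs $\dmax=o(n^{1/10})$, so we refine it. Expand $\cov(Q_i,Q_{i'})$ as a sum over pairs $j\in N_2(i)$, $j'\in N_2(i')$ of $\cov(\psi_i\psi_j,\psi_{i'}\psi_{j'})$. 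This covariance vanishes unless one of $\{i,j\}$ lies within distance $2$ of one of $\{i',j'\}$. Counting quadruples $(i,j,i',j')$ forming a connected chain, with each link of length at most $2$, gives $O(n\dmax^6)$ nonzero terms, each bounded by a constant. Therefore $\var(\tilde\sigma^2)=O(\dmax^6/n)$, which tends to zero by \Cref{asm:sparse-strong}. By Chebyshev's inequality, $\tilde\sigma^2-\sigman^2-\Gamma_n\toprob0$. Combined with Step 1, this proves the theorem.

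The careful part of the argument is the quadruple count in Step 3, together with getting a sharp enough rate in Step 1.
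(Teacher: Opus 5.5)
Your proof has the same structure as the paper's: an oracle form $\tilde\sigma^2$ built from the $\psi_i$, an exact expectation gap obtained from $\E\psi_i=(\zeta(1)-\zeta(0))(b_i-\bar b)$ and $\sigman^2=\frac1n\sum_i\sum_{j\in N_2(i)}\cov(\psi_i,\psi_j)$, and a variance bound $O(\dmax^6/n)$ from counting overlapping quadruples. Steps 2 and 3 match the paper's argument.

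Step 1 reaches the right conclusion, but it is justified by claims that are not true in general.

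\emph{The plug-in errors are not $O_p(n^{-1/2})$.} \Cref{lma:linearize} gives $\frac1n\sum_i\psi_i+o_p(n^{-1/2})$, whose scale is $\sigman/\sqrt n$, and no hypothesis bounds $\sigman$ from above. Take disjoint stars with $\dmax$ leaves, where the leaves of each star share a common slope deviation $\pm\delta$. The leaves' $\psi_j$ are then positively correlated through the center's assignment $(S_k,D_k)$, and $\sigman^2$ can grow like $\dmax$. In addition, \Cref{lma:linearize} says nothing about $\hat a-\abarn$. The rate actually available for both $\thetahat-\thetan$ and $\hat a-\abarn$ is $O_p(\dmax/\sqrt n)$. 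It follows from Chebyshev's inequality and the local-dependence variance bound $O(\dmax^2/n)$ in the proof of \Cref{lma:local-average}.

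\emph{The mean in your refinement need not be bounded.} You claim the mean of $\frac1n\sum_i\sum_{j\in N_2(i)}\psi_i c_j$ is bounded. For $c_j\propto S_j(2D_j-1)\tilde H_j$, a distance-two pair through a shared neighbor, such as two leaves of a star, has $\E[\psi_i c_j]$ of constant order times $b_i-\bar b$. So the mean can grow like $\dmax$ when slopes correlate with that structure.

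Neither error is fatal, because your dismissal of the crude bound is itself mistaken. We have $\max_i|\hat\psi_i-\psi_i|=O_p(\dmax/\sqrt n)$, and there are $O(n\dmax^2)$ ordered two-hop pairs. Hence $|\sigmanhat^2-\tilde\sigma^2|=O_p(\dmax^3/\sqrt n+\dmax^4/n)$, which is $o_p(1)$ precisely under $\dmax=o(n^{1/6})$. This crude bound is the paper's argument. Use it in place of your refinement, with the corrected rate, and Step 1 is complete.
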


See \Cref{app:proof-variance-limit} for the proof. The discrepancy $\Gamma_n$ arises because the individual influence terms need not have mean zero. The asymptotic variance $\sigman^2$ aggregates local covariances, whereas the variance estimator $\sigmanhat^2$ aggregates local raw cross-products. Their difference is the product of the influence-term means, which yields $\Gamma_n$. We impose the following restriction on the spatial correlation of heterogeneous treatment effects.

\begin{assumption}[Nonnegative local effect alignment]
    The unnormalized local covariance of treatment effects, computed within two-hop neighborhoods, is asymptotically nonnegative:
    \[
        \liminf_{n \to \infty} \Gamma_n \geqslant 0.
    \]
\label{asm:nonnegative-local-alignment}
\end{assumption}

This assumption is plausible when nearby nodes tend to have similar treatment effects, so local cross-products contribute positively, or when centered treatment effects of distinct nearby nodes are locally uncorrelated, so their cross-products average to zero. It permits some negative pairwise alignment because $\Gamma_n$ also includes the nonnegative terms $(b_i-\bar b)^2$. It holds automatically under homogeneous treatment effects.\footnote{The assumption can fail when treatment effects are strongly negatively correlated within two-hop neighborhoods. For example, consider a cycle graph with the repeating pattern $b_i-\bar b=\delta(1,1,-1,-1,\ldots)$. For each node, the self-product is $\delta^2$, the two products with adjacent nodes sum to zero on average, and the two products with nodes at distance two sum to $-2\delta^2$. Thus the average product over the two-hop neighborhood is $-\delta^2$, giving $\Gamma_n=-\left(\zeta(1)-\zeta(0)\right)^2\delta^2$ and making the variance estimator asymptotically anti-conservative.}

\begin{corollary}[Conservative confidence intervals]
    Under \Cref{asm:random,asm:exogenous-graph,asm:locality,asm:summary,asm:linearity,asm:bounded,asm:sparse-strong,asm:no-superefficient,asm:nonnegative-local-alignment}, the Wald confidence interval has asymptotic coverage at least $1 - \alpha$:
    \[
        \liminf_{n \to \infty} \PP\left(
            \thetan \in \left[
                \thetahat \pm z_{1 - \alpha/2}
                \frac{\sigmanhat}{\sqrt{n}\,\DeltaHt^{(n)}}
            \right]
        \right) \geqslant 1 - \alpha.
    \]
\label{cor:conservative-ci}
\end{corollary}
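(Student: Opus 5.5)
The plan combines \Cref{thm:normal}, \Cref{thm:consistency}, and \Cref{thm:variance-limit} through Slutsky's theorem, with \Cref{asm:nonnegative-local-alignment} providing the inequality. Since \Cref{asm:sparse-strong} implies \Cref{asm:sparse-medium} and \Cref{asm:sparse-weak}, all three earlier results are available. Write $c \coloneq \zeta(1) - \zeta(0) = 1 - p/2 > 0$ and $T_n \coloneq \sqrt{n}(\thetahat - \thetan)\, c / \sigman$. By \Cref{thm:normal}, $T_n \todist \Normal(0,1)$. The coverage event can be rewritten as
\[
    |T_n| \leqslant z_{1-\alpha/2}\, R_n,
    \qquad
    R_n \coloneq \frac{\sigmanhat}{\sigman} \cdot \frac{c}{\DeltaHt^{(n)}},
\]
which holds on the event $\DeltaHt^{(n)} > 0$. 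That event has probability tending to one by \Cref{thm:consistency}, and the same theorem gives $c / \DeltaHt^{(n)} \toprob 1$.

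The main step is to show that the ratio $\sigmanhat/\sigman$ is asymptotically at least one. Because $\sigman^2$ and $\Gamma_n$ depend on $n$, I would argue by subsequences rather than with a fixed limit. The sequence $\sigman^2$ is bounded, since $\psi_i$ is bounded by \Cref{asm:bounded} and $\psi_i$ depends only on a finite-hop neighborhood. More concretely, $\sigman^2 \leqslant \frac{1}{n}\sum_i \sum_{j\in N_2(i)} |\cov(\psi_i,\psi_j)| = O(\dmax^2)$. This crude bound is not uniformly bounded, so I would instead work with the ratio $\Gamma_n/\sigman^2$ directly.

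By \Cref{thm:variance-limit}, $\sigmanhat^2/\sigman^2 - 1 - \Gamma_n/\sigman^2 \toprob 0$, because $\liminf_n \sigman^2 > 0$ by \Cref{asm:no-superefficient}. Moreover $\liminf_n \Gamma_n \geqslant 0$ and $\sigman^2$ is bounded below, so for every $\epsilon > 0$ we eventually have $\Gamma_n/\sigman^2 \geqslant -\epsilon$. This requires $\sigman^2$ to be bounded above whenever $\Gamma_n < 0$. That holds because $\liminf_n \Gamma_n \geqslant 0$ means any negative part of $\Gamma_n$ tends to zero, and dividing a quantity tending to zero by something bounded below keeps it tending to zero. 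It follows that $\PP(\sigmanhat^2/\sigman^2 \geqslant 1 - 2\epsilon) \to 1$, and therefore $\PP(R_n \geqslant 1 - \delta) \to 1$ for every $\delta > 0$.

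To conclude, on an event of probability tending to one we have
\[
    \{|T_n| \leqslant z_{1-\alpha/2}(1-\delta)\} \subseteq \{\text{coverage}\}.
\]
Since $T_n$ converges in distribution to a continuous law, the probability of the left-hand event tends to $2\Phi(z_{1-\alpha/2}(1-\delta)) - 1$. Hence the $\liminf$ of the coverage probability is at least this quantity for every $\delta > 0$, and letting $\delta \downarrow 0$ gives the bound $1 - \alpha$. The main obstacle is the middle step: the quantities $\sigman^2$ and $\Gamma_n$ need not converge, so the argument must pass through the normalized difference $\sigmanhat^2/\sigman^2 - 1$ and use the one-sided bound on $\Gamma_n$ together with \Cref{asm:no-superefficient}, rather than applying Slutsky to fixed limits.
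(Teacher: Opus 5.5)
Your proposal is correct and follows essentially the same route as the paper. Both proofs factor the studentized statistic into the asymptotically normal $T_n$, the ratio $\sigman/\sigmanhat$, and $\DeltaHt^{(n)}/(1-p/2)$; both then combine \Cref{thm:variance-limit} with the lower bound from \Cref{asm:no-superefficient} and \Cref{asm:nonnegative-local-alignment} to show $\sigmanhat/\sigman$ is asymptotically at least $1-\delta$, and let $\delta \downarrow 0$. One small slip: the step $\Gamma_n/\sigman^2 \geqslant -\epsilon$ needs $\sigman^2$ bounded \emph{below}, not above, as your next sentence in fact uses, so neither an upper bound on $\sigman^2$ nor a subsequence argument is required.
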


See \Cref{app:proof-conservative-ci} for the proof.

\subsection{Approximate variance estimator}
\label{sec:diagonal-variance}

Computing \Cref{eq:variance-estimator} requires the adjacency matrix and the two-hop neighborhoods. These can be computationally costly to work with on a large graph, and they may not be available at all when graph information is not retained after the experiment. Dropping the one-hop and two-hop blocks gives the diagonal approximation
\[
    \sigmandiag^2 \coloneq \frac{1}{n} \sum_i \hat\psi_i^2 = \frac{4}{p^2} \cdot \frac{1}{n} \sum_i S_i \big(Y_i - \hat\theta \tilde H_i - \hat{a}\big)^2,
\]
which ignores the network correlation of the influence terms $\hat\psi_i$. The following result quantifies what is lost.

\begin{theorem}[Second-order bias of the diagonal variance estimator]
    Under \Cref{asm:random,asm:exogenous-graph,asm:locality,asm:summary,asm:linearity,asm:bounded,asm:sparse-strong}, with correct weights $\tilde w = w$ and a symmetric tie-break $u = \E[U_{ij}] = 1/2$, the bias of $\sigmandiag^2$ is asymptotically a quadratic form in the centered slopes,
    \[
        \sigmandiag^2 - \sigman^2 \toprob \frac{1}{n} \sum_i \sum_{j \in N_2(i)} (b_i - \bar b)(b_j - \bar b) \Lambda_{ij},
    \]
    where the coefficients $\Lambda_{ij}$ are bounded.
\label{thm:diagonal-quadratic}
\end{theorem}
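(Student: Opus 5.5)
The plan is to reduce the diagonal estimator to its population analogue and then expand $\sigman^2$ into diagonal and off-diagonal covariance blocks. Write $X_i \coloneq Y_i - \thetan \tilde H_i - \abarn$ and $\psi_i = \frac{2}{p}S_i(2D_i-1)X_i$.

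\textbf{Step 1: remove estimation error.} Using \Cref{thm:consistency} and the same local law-of-large-numbers argument (\Cref{lma:local-average}) employed in the proof of \Cref{thm:variance-limit}, we have $\hat\theta - \thetan \toprob 0$ and $\hat a - \abarn \toprob 0$. Since all quantities are bounded by \Cref{asm:bounded}, the estimation error can be dropped, giving
\[
\sigmandiag^2 = \frac1n\sum_i \psi_i^2 + o_p(1).
\]
The variance of $\frac1n\sum_i\psi_i^2$ is $O(\dmax^2/n)$ because each $\psi_i^2$ depends only on variables within $N_1(i)$. Hence $\sigmandiag^2 - \frac1n\sum_i\E[\psi_i^2]\toprob 0$.

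\textbf{Step 2: decompose $\sigman^2$ and compute the diagonal bias.} Write
\[
\sigman^2 = \frac1n\sum_i \var(\psi_i) + \frac1n\sum_i\sum_{j\in N_2(i), j\ne i}\cov(\psi_i,\psi_j).
\]
The difference of interest is therefore
\[
\frac1n\sum_i \E[\psi_i]^2 - \frac1n\sum_{i}\sum_{j\in N_2(i),j\neq i}\cov(\psi_i,\psi_j).
\]
To compute $\E[\psi_i]$, substitute $Y_i = a_i + b_iH_i$ with $w=\tilde w$, so that
\[
X_i = (a_i-\bar a) + (b_i-\bar b)H_i.
\]
Then
\[
\E[\psi_i] = (a_i-\bar a)\cdot 0 + (b_i-\bar b)(\zeta(1)-\zeta(0)),
\]
because $\E[S_i(2D_i-1)]=0$ and $\E[S_i(2D_i-1)H_i] = \frac p2(\zeta(1)-\zeta(0))$ by \Cref{lma:uniform-compliance}. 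Thus the diagonal contributes $(\zeta(1)-\zeta(0))^2(b_i-\bar b)^2$, which is a quadratic-form term with $\Lambda_{ii}=(\zeta(1)-\zeta(0))^2$.

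\textbf{Step 3: off-diagonal covariances (main obstacle).} For $j\neq i$ within two hops, we must show that $\cov(\psi_i,\psi_j)$ is a bilinear form in $(b_i-\bar b,b_j-\bar b)$ with coefficient bounded by $C\,w_{ik}w_{jk}$-type edge weights summed over shared edges and neighbors. In particular, we must show that the terms involving $(a_i-\bar a)$ vanish. The structure is
\[
\cov(\psi_i,\psi_j) = \frac{4}{p^2}\cov\bigl(S_i\epsilon_i X_i,\,S_j\epsilon_j X_j\bigr), \qquad \epsilon_i = 2D_i-1.
\]

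\emph{Terms $(a_i-\bar a)(a_j-\bar a)$.} These involve $\E[S_iS_j\epsilon_i\epsilon_j]=0$ for $i\ne j$ by independence.

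\emph{Cross terms $(a_i-\bar a)(b_j-\bar b)$.} These require $\E[S_i\epsilon_i\,S_j\epsilon_jH_j]$. Here $H_j$ depends on $D_i$ through $Z_{ji}$ (if $i\sim j$) or through nothing (if $i$ is at distance two). When $i\sim j$, conditioning on $S_i=S_j=1$, $Z_{ij}$ is symmetric under the joint flip $(D_i,D_j,U_{ij})\mapsto(1-D_i,1-D_j,1-U_{ij})$ exactly when $u=1/2$. Under this flip $\epsilon_i\epsilon_j$ is invariant, $Z_{ij}\mapsto 1-Z_{ij}$, and the other edges of $j$ have mean $\zeta(D_j)$. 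This is where the symmetric tie-break and correct weights are used. I expect carefully verifying that these mixed terms vanish, or reduce to terms that are then absorbed, to be the delicate part. If exact cancellation fails, I would fall back to showing that the leftover mixed terms average out through $\sum_i(a_i-\bar a)$-type centering, though that route is less clean.

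\emph{Terms $(b_i-\bar b)(b_j-\bar b)$.} The remaining term is $(b_i-\bar b)(b_j-\bar b)\cov(S_i\epsilon_iH_i,S_j\epsilon_jH_j)$. The covariance is bounded, giving $\Lambda_{ij}$ bounded. Combining this with the diagonal contribution from Step 2 yields the stated limit.
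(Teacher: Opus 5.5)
Your proposal is correct and follows essentially the same route as the paper. You reduce $\sigmandiag^2$ to $\frac{1}{n}\sum_i \E[\psi_i^2]$ and write the gap as $\frac{1}{n}\sum_i (\E[\psi_i])^2$ minus the off-diagonal covariances over $j \in N_2(i)$. You then expand $X_iX_j$ with $X_i = (a_i - \bar a) + (b_i - \bar b) H_i$, so that only the $(b_i - \bar b)(b_j - \bar b)$ term survives.

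Your flip-symmetry argument already gives exact cancellation of the mixed terms, so the fallback you mention is unnecessary. It is equivalent to the paper's direct computation $\E[(2D_i-1)(2D_j-1)Z_{ij} \mid S_i = S_j = 1] = (1-2u)/4 = 0$, and the other edges of $j$ drop out because they are independent of $(S_i, D_i)$.
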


See \Cref{app:proof-diagonal-quadratic} for the proof. The bias is second order in the dispersion of the treatment effects. It is therefore small relative to $\sigman^2$ whenever the slopes $b_i$ vary little relative to the dispersion of the baseline outcomes $a_i$ and the average two-hop neighborhood size remains bounded. We show in \Cref{sec:simulations} that this approximation has good properties in our empirical simulations.

\section{Empirical study}
\label{sec:simulations}

\begin{table}[t]
    \centering
    \caption{Mean estimates and standard errors by estimator and outcome metric.}
    \label{tab:simulation-mean-estimates}
    \begin{tabular}{llcccc}
        \toprule
        & & \multicolumn{2}{c}{Mean estimate} & \multicolumn{2}{c}{Standard error} \\
        \cmidrule(lr){3-4} \cmidrule(lr){5-6}
        Estimator & Exposure & Total activity & Days active & Total activity & Days active \\
        \midrule
        Wald & Correct & 0.999 & 0.999 & 0.0593 & 0.0119 \\
        Wald & Misspecified & 0.999 & 0.999 & 0.0593 & 0.0119 \\
        Diff-in-means & --- & 0.749 & 0.749 & 0.0445 & 0.0089 \\
        OLS & Correct & 0.964 & 0.933 & 0.0595 & 0.0109 \\
        OLS & Misspecified & 0.889 & 0.908 & 0.0534 & 0.0109 \\
        ERL & Correct & 1.001 & 1.000 & 0.0567 & 0.0112 \\
        ERL & Misspecified & 1.001 & 1.000 & 0.0574 & 0.0113 \\
        \bottomrule
    \end{tabular}
    \par\vspace{2ex}
    \begin{minipage}{0.8\textwidth}
        \footnotesize \textit{Note:} The table reports relative treatment-effect estimates across 20,000 Monte Carlo draws. All entries are reported in percentage points. Standard errors are the standard deviations of the estimates across simulation draws. The experiment sampling proportion is 50\%, and the simulated relative treatment effect is 1\%.
    \end{minipage}
\end{table}

We evaluate the performance of our estimator using data from an online communication platform. Our simulations reflect a hypothetical experiment improving the communication channel. We use one week of data and construct an undirected network from users engaging in pairwise interactions. The nodes are users, and two users are joined by an edge whenever they had at least one observed interaction. The weight $w_{ij}$ is the share of user $i$'s interactions with user $j$, which is asymmetric and sums to one over each node's edges. In each Monte Carlo draw, we sample a proportion $p$ of users, randomly assign sampled users to test or control, and resolve conflicting endpoint assignments according to \Cref{eq:assignment_rule}.

We consider two user-level outcome metrics, total interaction activity and number of days active. Total activity is heavy-tailed and has higher variance than days active, resulting in larger standard errors in our simulations. We take each user's observed value of the metric as the baseline $a_i$ and generate outcomes multiplicatively, $Y_i = a_i (1 + \tau H_i)$, where $H_i$ is the true interaction-weighted exposure and $\tau$ is the relative treatment effect. The slope $b_i = \tau a_i$ therefore varies across users, so the design carries heterogeneous treatment effects. The estimand in relative units, $\thetan / \bar{a}$, is exactly $\tau$, and we report all estimates on that scale because the absolute levels of the metrics are proprietary. The misspecified exposure $\tilde H_i = \sum_j \tilde w_{ij} Z_{ij}$ instead weights every user pair equally, $\tilde w_{ij} = 1 / d_i$.

\Cref{tab:simulation-mean-estimates} presents mean estimates for the two outcome metrics under our baseline specification, with a 1\% relative treatment effect and a 50\% experiment sampling proportion. The Wald estimates remain close to the true effect with either the correct or misspecified exposure weights, reflecting \Cref{thm:consistency}, which establishes the consistency of the Wald estimator even when the exposure weights are misspecified. The difference-in-means estimator is strongly downward biased because treatment dilution shrinks the contrast in exposure between test and control. OLS is downward biased even with the correct exposure weights, and the bias is larger when the weights are misspecified. This result reflects \Cref{thm:ols}, which shows that OLS is generally biased under heterogeneous treatment effects. In our data, users with higher engagement metric values have more interaction partners and therefore lower weight concentration $\sum_j w_{ij}^2$, so OLS weights them least while their slopes $b_i = \tau a_i$ are largest, pulling the estimate below the true effect.

We also test the performance of ERL under the same estimation design.\footnote{ERL of \citet{harshaw2023design} estimates GATE by $\hat\theta_{\mathrm{ERL}}^{(n)} \coloneq \frac{1}{N} \sum_i S_i (H_i - \mu_i) Y_i / v_i$, where $\mu_i \coloneq \E[H_i \mid S_i = 1]$ and $v_i \coloneq \var(H_i \mid S_i = 1)$. In the simulation we compute $\mu_i$ and $v_i$ per unit in closed form, using $\mu_i = 1/2$ from \Cref{lma:uniform-compliance} and $v_i$ from \Cref{eq:vi-decomp}. We apply centering that replaces $Y_i$ by $Y_i - \bar a$, which leaves the estimand unchanged while removing the estimator's dependence on the realized mean of the scores and decreasing its variance.} ERL is unbiased under both the correct and the misspecified weights.\footnote{This is not a general property of ERL, but a consequence of our randomization design and the uniform weighting $\tilde w_{ij} = 1/d_i$, for which $\cov(H_i, \tilde H_i \mid S_i = 1)$ and $\var(\tilde H_i \mid S_i = 1)$ coincide. We omit the details.} Its standard errors are slightly smaller than those of the Wald estimator. However, as we discuss in \Cref{sec:erl-comparison}, the small variance of ERL is a consequence of our randomization design and of the default setting, in which no unit has near-zero exposure variance to inflate its inverse-variance weight. This property is hard to maintain in practice, where we expect the Wald estimator to have an advantage in statistical power.

\begin{figure}[t]
    \centering
    \includegraphics[width=0.75\textwidth]{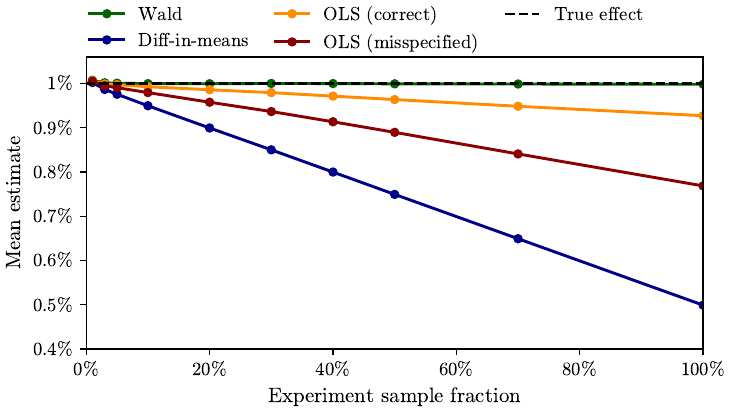}
    \caption{Mean treatment-effect estimates.}
    \label{fig:simulation-sample-fraction}
    \vspace{1ex}
    \begin{minipage}{\textwidth}
        \footnotesize \textit{Note:} The figure reports mean estimates for total activity across Monte Carlo simulations as the experiment sampling proportion varies, holding the relative treatment effect fixed at 1\%. Each point uses 20,000 Monte Carlo draws at sampling proportions of 5\%, 20\%, and 50\%, and 5,000 draws at the remaining proportions. The displayed Wald curve uses the correct exposure weights (the Wald curves under the correct and misspecified weights coincide).
    \end{minipage}
\end{figure}

\begin{figure}[t]
    \centering
    \includegraphics[width=\textwidth]{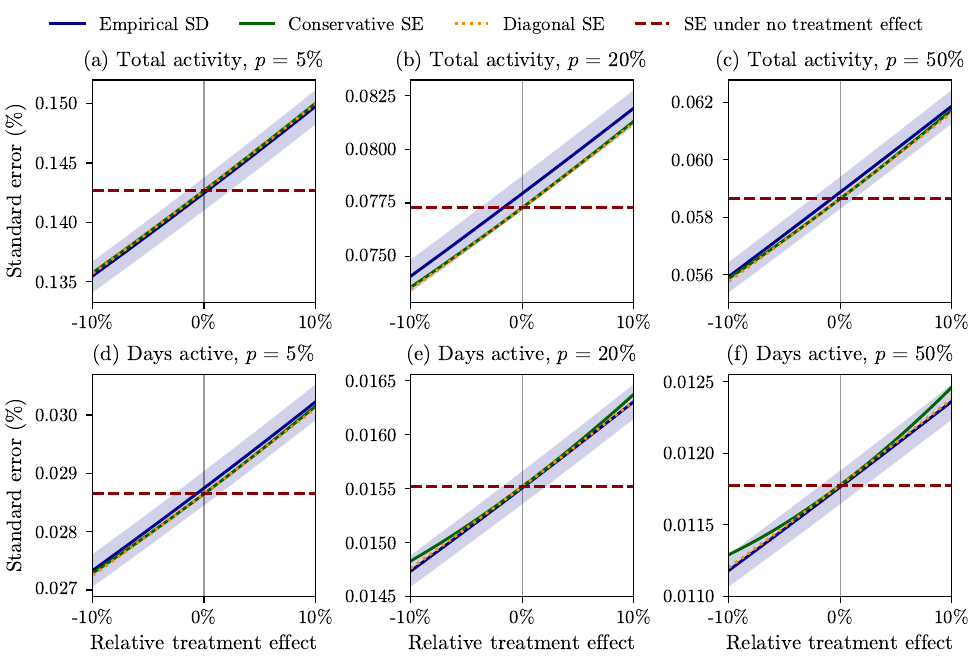}
    \caption{Standard error of the Wald estimator.}
    \label{fig:simulation-standard-error}
    \vspace{1ex}
    \begin{minipage}{\textwidth}
        \footnotesize \textit{Note:} Each panel varies the relative treatment effect at a fixed sampling proportion, with total activity on the top row and days active on the bottom. The conservative and diagonal curves are the implied standard errors of $\sigmanhat^2$ and $\sigmandiag^2$, taken as medians across 100 independent simulation draws, and the third curve is the implied standard error of $\sigman^2$ under no treatment effect. The empirical standard deviation is the realized dispersion of the estimates across 20,000 Monte Carlo draws, and the shaded band around it is a 95\% bootstrap interval obtained by resampling those draws. All results are for correct weights.
    \end{minipage}
\end{figure}

\Cref{fig:simulation-sample-fraction} illustrates the effect of treatment dilution as the experiment sampling proportion increases.
With a larger sample, more edges connect nodes with conflicting assignments, making the attenuation of the difference-in-means estimator substantial. By contrast, the Wald estimator corrects for the diluted exposure contrast and remains close to the true effect across sampling proportions. The OLS bias also gets worse as the experiment sampling proportion increases.

We next study the asymptotic distribution of the Wald estimator. The histograms in \Cref{fig:simulation-distributions} show that the sampling distributions of the Wald estimator are approximately normal for both outcome metrics, in line with the asymptotic normality established in \Cref{thm:normal}. Anderson--Darling tests do not reject normality for either metric: the test statistic is $A^2 = 0.41$ for total activity and $A^2 = 0.32$ for days active, both below the 5\% critical value of $0.787$.

Finally, we compare the properties of different variance estimators. \Cref{fig:simulation-standard-error} presents results for correct weights. We take the empirical standard deviation of $\thetahat$ across simulation draws as the benchmark, since it is the realized sampling dispersion that $\sigman^2$ is meant to approximate. We show the results for variance estimators based on \Cref{sec:conservative-variance} and \Cref{sec:diagonal-variance}. The conservative and the diagonal standard errors both provide a good approximation to the benchmark. Over the range of treatment effects and sampling proportions we consider, they lie within the bootstrap interval of the empirical standard deviation at every point of the grid. We also present the variance under no treatment effect derived in \Cref{sec:asymptotic-normality}. For small treatment effects both the conservative and the diagonal standard errors remain close to it, while for larger effects they provide a better approximation. We find that under the misspecified weights the performance of both variance estimators stays the same.

We also find that \Cref{asm:nonnegative-local-alignment} is satisfied for both metrics, so the estimator in \Cref{eq:variance-estimator} is conservative. Its upward bias is small, however, because $\Gamma_n$, which is quadratic in the treatment effects, is small relative to the variance contributed by baseline outcomes. We observe higher conservativeness for days active. This reflects a property of the metric: relative to its own dispersion, days active is more strongly aligned across neighboring users than total activity.

\section{Practical considerations}
\label{sec:practical}

This section briefly discusses several extensions that could matter in deployed experiments.

\subsection{Covariate adjustment}

Our method can be combined with standard variance reduction techniques such as covariate adjustment. One way to do it while staying within the Wald framework is to use out-of-experiment users to regress the outcome $Y_i$ on exogenous covariates $X_i$, obtaining a fitted function $\hat\mu(X_i)$, and then apply the Wald estimator to the adjusted outcomes $\tilde Y_i = Y_i - \hat\mu(X_i)$ rather than the raw $Y_i$. This effectively replaces the baseline in the linear model $Y_i = a_i + b_i H_i$ with $\tilde a_i = a_i - \hat\mu(X_i)$, leaving the slopes $b_i$ and the estimand GATE unchanged while reducing the baseline variability. This is illustrated by the no-treatment-effect result in \Cref{sec:asymptotic-normality}: the asymptotic variance scales with the population variance of the baselines, so shrinking the baseline residuals through covariate adjustment directly reduces it.

\subsection{Holdout group}
\label{sec:holdout}

In the context of online experimentation, there is often a long-running holdout group of users to whom new features are never rolled out. Our method can be extended to accommodate this practical constraint. Formally, we assume there exists a Bernoulli random variable $F_i \sim \Bern(f)$ for each node $i$, with $F_i = 1$ indicating that node $i$ belongs to the frozen holdout group. We assume holdout is selected randomly, so the $F_i$ are jointly independent of $S_i, D_i, U_{ij}$ and satisfy the analogue of \Cref{asm:random}. The rule described in \Cref{eq:assignment_rule} can be extended so that $Z_{ij} = 0$ whenever $F_i = 1$ or $F_j = 1$, ensuring that no edge incident to a frozen node receives the treatment. It is easy to see that \Cref{lma:uniform-compliance} extends to this setting as well, with the compliance
\[
    \E[\tilde H_i \mid S_i = 1, F_i = 0, D_i = 1] - \E[\tilde H_i \mid S_i = 1, F_i = 0, D_i = 0]
\]
constant across all non-frozen units $i$. Therefore, $D_i$ can still be used as an instrument for the exposure of sampled, non-frozen users, and the Wald estimator delivers consistent estimates of GATE. Concretely, this compliance equals $(1-f)(1-p/2)$. When the holdout share $f$ is substantial, the IV correction is needed even for small experiments (small $p$), since conflicts with frozen neighbors generate substantial dilution of user experience.

\subsection{Multi-arm experiments}
\label{sec:multi-arm}

The Wald estimator can be applied to experiments with several arms. Suppose each sampled node $i$ is assigned to one of $K + 1$ arms $D_i \in \{0, 1, \ldots, K\}$, where $D_i = 0$ denotes control. The assignment rule from \Cref{eq:assignment_rule} extends to this case by resolving conflicts between different arms uniformly at random, similarly to how conflicts between test and control were resolved in the binary case.\footnote{In the multi-arm setting, $U_{ij}$ must be strictly $\Bern(0.5)$ so that conflicts between control and a test arm and conflicts between two test arms both resolve with the same probability $1/2$. Otherwise the asymmetry between the two conflict types breaks the exclusion restriction below.} Let $H_i^{(k)} \coloneq \sum_j w_{ij} \mathds{1}\{Z_{ij} = k\}$ denote node $i$'s exposure to arm $k$, with the proxy exposure $\tilde H_i^{(k)}$ defined analogously. We can then estimate the effect of arm $k$ relative to control as
\[
    \hat\theta_k = \frac{\frac{1}{N_k} \sum_i Y_i S_i \mathds{1}\{D_i = k\} - \frac{1}{N_0} \sum_i Y_i S_i \mathds{1}\{D_i = 0\}}{\frac{1}{N_k} \sum_i \tilde H_i^{(k)} S_i \mathds{1}\{D_i = k\} - \frac{1}{N_0} \sum_i \tilde H_i^{(k)} S_i \mathds{1}\{D_i = 0\}},
\]
where $N_k = \sum_i S_i \mathds{1}\{D_i = k\}$ is the number of sampled nodes assigned to arm $k$. It is easy to see that \Cref{lma:uniform-compliance} extends to this setting as well, with the compliance for arm $k$
\[
    \E[\tilde H_i^{(k)} \mid S_i = 1, D_i = k] - \E[\tilde H_i^{(k)} \mid S_i = 1, D_i = 0],
\]
constant across all sampled units $i$. Moreover, for any other arm $\ell \ne k$, the expected exposure is the same under $D_i = k$ and $D_i = 0$,
\[
    \E[H_i^{(\ell)} \mid S_i = 1, D_i = k] = \E[H_i^{(\ell)} \mid S_i = 1, D_i = 0],
\]
since $Z_{ij} = \ell$ requires a sampled neighbor with $D_j = \ell$ to win a tie-break against $D_i$, and the probability of this does not depend on the specific value of $D_i$ as long as $D_i \ne \ell$. The exclusion restriction therefore holds. However, to extend the theoretical results of \Cref{sec:consistency} on consistency to the multi-arm case, we would need a stronger linearity assumption that replaces \Cref{asm:linearity} with linearity with respect to exposures of all experimental arms,
\begin{equation}
    Y_i = a_i + \sum_{k=1}^K b_i^{(k)} H_i^{(k)}.
\label{eq:multi-arm-linearity}
\end{equation}
In addition to this, multi-arm experiments rely on linearity more heavily whenever the per-arm sample fraction is held fixed to preserve estimation power. The total sample $p$ then grows linearly with $K$, increasing the share of edges in conflict and pushing each arm's exposure further from $0$ and $1$. By \Cref{thm:nonlinear-weighting}, extrapolation to the full-rollout counterfactual is then more sensitive to deviations from linearity because the estimator places less weight on marginal responses near the endpoints.

\subsection{Experiments with conflicts}
\label{sec:experiment-conflicts}

Online platforms sometimes concurrently run several independent experiments that affect the same surface or product parameter. In this case, two treatments cannot be assigned to the same edge at the same time. Our framework can be applied to this setting as well, with disjoint node samples across experiments. Conflicts on edges connecting nodes in different experiments are resolved at random. Similarly, this extension requires outcomes to be linear in exposure to every experiment's treatment, as in \Cref{eq:multi-arm-linearity}. The Wald estimator scales the outcome difference by the corresponding exposure difference within each experiment. The validity argument is the same as in \Cref{sec:multi-arm}.

\subsection{Non-monotonicity of power}

An interesting feature of our design is that statistical power may be non-monotonic in the sampling proportion $p$. Increasing $p$ adds more sampled units, which typically reduces variance. However, a larger $p$ also increases the fraction of conflicting edges (edges between sampled nodes with different treatment assignments), which dilutes the experience of both test and control nodes, shrinking the difference in their average exposure. Since the asymptotic variance of $\thetahat$ scales as $\sigman^2 / (1 - p/2)^2$, the denominator shrinks as $p$ increases, potentially offsetting the variance reduction from a larger sample. For example, under no treatment effect the variance is proportional to $1 / [p (1 - p/2)^2]$, which is minimized at $p^* = 2/3$. Moreover, even for $p < p^*$ the returns from increasing $p$ are smaller than the standard $\frac{1}{\sqrt p}$ scaling would suggest, since the dilution factor $(1-p/2)^2$ erodes part of the gain.

\subsection{Choice of denominator}
\label{sec:denominator}

In the Wald estimator \Cref{eq:wald}, the denominator $\DeltaHt^{(n)}$ is a sample quantity computed from the observed data. An alternative is to replace it with its population-level limit $\zeta(1) - \zeta(0) = 1 - p/2$, yielding the ratio estimator $\DeltaY^{(n)} / (1 - p/2)$. Both are consistent for GATE by \Cref{thm:consistency}. Using the sample denominator can reduce variance by adapting to finite-sample fluctuations in the realized exposure, provided the researcher's weights approximate the true weights. In addition, the theoretical compliance $\zeta(1) - \zeta(0)$ is straightforward to compute in the baseline setting but less so in the extensions of \Cref{sec:holdout,sec:multi-arm,sec:experiment-conflicts}, where holdout shares, the number of arms, or conflicting experiment assignments enter the expression. The empirical $\DeltaHt^{(n)}$ is therefore a simpler and more robust default: it tracks whatever assignment mechanism is in use, without a separate derivation for each variant of the design.

\subsection{Comparison with ERL}
\label{sec:erl-comparison}

\Cref{tab:simulation-mean-estimates} shows ERL attaining standard errors slightly below those of the Wald estimator. When many nodes have high edge-weight concentration, ERL can be more precise because it exploits the full randomization-induced variation in $H_i$, including variation due to neighbors' assignments, while Wald uses only the component associated with the instrument $D_i$. Conversely, it is easy to show that Wald's empirical first stage can cancel common exposure noise appearing in both the numerator and the denominator, which can make Wald more precise when treatment effects are homogeneous or nearly homogeneous.

However, this comparison largely understates both ERL's implementation burden and the standard errors it may incur under more complex designs. ERL requires each unit's exposure variance $v_i$, which is available in closed form under our baseline design. However, with a holdout group or concurrent experiments on the same surface, as in \Cref{sec:holdout,sec:experiment-conflicts}, a closed-form expression is generally unavailable. The associated assignment mechanisms can be complex and dynamic. Recovering $v_i$ then requires simulating the full assignment mechanism, which is difficult to run and may be infeasible when the graph is not retained after the experiment.

Moreover, treating the holdout group or concurrent experiments as random requires simulating the corresponding assignments as part of the mechanism, which may be intractable for the same reasons. Treating them as fixed instead removes the variance floor of \Cref{eq:vi-decomp}. This conditioning can leave $v_i$ at zero for units whose edges they fully determine (e.g., nodes connected only to holdout nodes) and inflate the weight of units that are only partly free. The Wald estimator does not suffer from these issues because it does not invert unit-level exposure variances, and its empirical first stage in \Cref{sec:denominator} tracks whatever mechanism is in force.

Finally, our setting does not generally admit the exact variance estimator of \citet{harshaw2023design}. Their Assumption~4, non-degenerate exposures, rules out degree-one nodes because such a node's exposure is binary. Degree-one nodes are important in social-network applications, and excluding them would largely change the target population.

\subsection{Two-sided platforms}
\label{sec:two-sided}

Many platforms are naturally two-sided, with edges connecting nodes of two distinct types, such as content creators and consumers, drivers and riders, buyers and sellers, hosts and guests, or couriers and eaters. Edge-level treatments can include any feature that affects the experience after a match is made, such as the order cancellation flow or in-trip support.

Formally, in this case the nodes of $G$ partition into two sides $\mathcal{A}$ and $\mathcal{B}$, and the graph is bipartite, so $w_{ij} = 0$ whenever $i$ and $j$ lie on the same side. Our assignment rule can be directly applied to this setting as well, as illustrated in \Cref{fig:two-sided}. However, each side now carries its own outcome metric and its own estimand, $\theta^{(n)}_{\mathcal{A}} = \frac{1}{|\mathcal{A}|}\sum_{i \in \mathcal{A}} b_i$ and $\theta^{(n)}_{\mathcal{B}} = \frac{1}{|\mathcal{B}|}\sum_{i \in \mathcal{B}} b_i$. We can apply the Wald estimator separately to the sampled nodes of each side, recovering $\theta^{(n)}_{\mathcal{A}}$ and $\theta^{(n)}_{\mathcal{B}}$. One can show that under \Cref{asm:sparse-weak} applied to the bipartite graph, both side estimators are consistent.

\begin{figure}[t]
    \centering
    \begin{tikzpicture}[
        sqsampled/.style={rectangle, draw, thick, minimum size=7mm, font=\bfseries},
        squnsampled/.style={rectangle, draw, fill=black!15, minimum size=7mm},
        cisampled/.style={circle, draw, thick, minimum size=7mm, font=\bfseries},
        ciunsampled/.style={circle, draw, fill=black!15, minimum size=7mm},
        cedge/.style={thick, blue!60!black},
        tedge/.style={thick, red!70!black},
        edgelabel/.style={fill=white, inner sep=1pt, font=\small},
    ]
        % top row (side A): squares
        \node[sqsampled]  (t1) at (0, 1.8) {C};
        \node[sqsampled]  (t2) at (2, 1.8) {T};
        \node[squnsampled] (t3) at (4, 1.8) {};
        % bottom row (side B): circles
        \node[ciunsampled] (b1) at (0, 0) {};
        \node[cisampled]  (b2) at (2, 0) {C};
        \node[cisampled]  (b3) at (4, 0) {T};
        % edges from top C (t1) to all three bottom nodes
        \draw[cedge] (t1) -- node[edgelabel] {C} (b1);
        \draw[cedge] (t1) -- node[edgelabel] {C} (b2);
        \draw[tedge] (t1) -- node[edgelabel] {T} (b3);
        % edges from bottom T (b3) to top T and top unsampled
        \draw[tedge] (b3) -- node[edgelabel] {T} (t2);
        \draw[tedge] (b3) -- node[edgelabel] {T} (t3);
    \end{tikzpicture}
    \caption{Edge assignments on a two-sided platform.}
    \label{fig:two-sided}
    \vspace{1ex}
    \begin{minipage}{\textwidth}
        \footnotesize \textit{Note:} Side $\mathcal{A}$ nodes are drawn as squares and side $\mathcal{B}$ nodes as circles. Sampled nodes ($S_i = 1$) are labeled by their treatment assignment $D_i \in \{C, T\}$, while unsampled nodes ($S_i = 0$) are shown as gray-filled. Edge labels show the realized edge treatment $Z_{ij}$ from \Cref{eq:assignment_rule}. The edge between the control square and the test circle is in conflict and has been drawn as $T$.
    \end{minipage}
\end{figure}

The motivation to apply our methodology is different in the two-sided case. In the bipartite graph setting, one could instead randomize only one side, say side $\mathcal{A}$, and let every edge inherit the treatment of its endpoint in $\mathcal{A}$. Because the graph is bipartite and only side $\mathcal{A}$ is randomized, each edge has exactly one randomized endpoint, so no conflicting assignments arise. The effect on side $\mathcal{A}$ can then be estimated by a simple difference in means, since each such node's edges all share its own treatment. The effect on side $\mathcal{B}$ can be estimated with the ERL estimator of \citet{harshaw2023design}. In the latter case, however, either clustering is needed or the exposure of high-degree nodes concentrates and inflates the variance of the estimator.

Our approach instead delivers a simple, low-variance estimator for both sides at once. The cost is that we rely on the linearity assumption (\Cref{asm:linearity}) for both sides, not only for side $\mathcal{B}$, and we assume away interference between nodes in $\mathcal{B}$ through nodes in $\mathcal{A}$ (\Cref{asm:locality}). See \citet{johari2022experimental, bajari2021multiple} on interference in two-sided platforms.

\section{Conclusion}
\label{sec:conclusion}

We have presented a framework for running and analyzing experiments with edge-level treatments in networks. By using node-level randomization as an instrument for edge-level exposure, the Wald estimator is consistent and asymptotically normal for GATE under a linear potential outcomes model and sparse graph assumptions.

Several directions remain for future work. \Cref{asm:locality} rules out spillovers from neighbors' exposures to a node's outcome. Extending the framework to accommodate such spillovers, perhaps through richer exposure mappings, is a natural next step. See \citet{hudgens2008toward, aronow2017estimating, saveski2017detecting, savje2024causal, leung2022causal} for different approaches to network interference.

Another assumption that may fail in practice is graph exogeneity (\Cref{asm:exogenous-graph}): in some applications the treatment may itself shift the edge weights. For instance, a new feature could change how a user distributes communications across friends. Characterizing the bias when weights are endogenous to treatment, and adjusting for it, is an open question. \citet{wang2026experimentation, gao2024endogenous} study experimentation when the treatment itself reshapes the graph.

Finally, extending the method from pairwise to group communication is a natural next step. In such settings, conflicting assignments introduce heterogeneous compliance: for example, users participating in larger groups face a higher chance of conflicting assignments and therefore have systematically lower compliance. The individual compliances can still be computed numerically from the assignment rule and the graph. Several approaches in the literature could then be adapted to recover GATE rather than a compliance-weighted average: 2SLS arguments that absorb compliance heterogeneity through treatment-by-covariate interactions \citep{wooldridge1997two,wooldridge2003further}, and weighting or matching-based IV estimators in the spirit of \citet{abadie2003semiparametric} and \citet{frolich2007nonparametric}, adapted so the weights neutralize per-unit compliance.

\printbibliography

\begin{appendices}
\crefalias{section}{appendix}
\crefalias{subsection}{appendix}

\section{Proofs}
\label{app:proofs}

The following concentration argument will be useful in several of the proofs below.

\begin{lemma}[Local averages concentrate]
\label{lma:local-average}
    Let $g^{(n)}$ be a sequence of polynomials whose coefficients are bounded
    uniformly in $n$, and set $g_i^{(n)} = g^{(n)}(S_i, D_i, H_i, \tilde H_i, Y_i)$. Under
    \Cref{asm:bounded,asm:random,asm:sparse-weak},
    \[
        \frac{1}{n} \sum_i \big(g_i^{(n)} - \E[g_i^{(n)}]\big) \toprob 0.
    \]
\end{lemma}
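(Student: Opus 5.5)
The plan is a Chebyshev (second-moment) argument built on the local dependence structure of the summands. Write $T_n = \frac{1}{n}\sum_i (g_i^{(n)} - \E[g_i^{(n)}])$. It suffices to show $\var(T_n) = \frac{1}{n^2}\sum_{i,j}\cov(g_i^{(n)}, g_j^{(n)}) \to 0$, since Chebyshev's inequality then gives $\PP(|T_n|>\varepsilon)\leqslant \var(T_n)/\varepsilon^2 \to 0$ for every $\varepsilon>0$.

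The first step is to bound each summand uniformly. By \Cref{asm:random} and the definition of the assignment rule, $S_i, D_i \in \{0,1\}$, and $H_i, \tilde H_i \in [0,1]$ because the weights are nonnegative and sum to one. By \Cref{asm:bounded}, together with \Cref{asm:linearity} and $H_i\in[0,1]$ (so $Y_i$ is a convex combination of $a_i$ and $a_i+b_i$), we get $|Y_i| < M$. Strictly, \Cref{asm:linearity} is not listed in the lemma. I would either use it implicitly, as the paper's setting does, or read \Cref{asm:bounded} as bounding the outcome directly.

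Since $g^{(n)}$ is a polynomial, I take its degree to be bounded uniformly in $n$, which the intended reading requires. Its coefficients are also uniformly bounded, so $|g_i^{(n)}| \leqslant C$ for a constant $C$ independent of $i$ and $n$. Hence $|\cov(g_i^{(n)}, g_j^{(n)})| \leqslant C^2$ for every pair.

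The second step identifies which covariances can be nonzero. Under \Cref{asm:locality,asm:summary}, $g_i^{(n)}$ is a measurable function of $S_i$, $D_i$, $\{S_k, D_k, U_{ik} : k \in N_1(i)\}$ and $\tilde H_i$. The last of these is a function of the same variables, since $\tilde w$ is supported on the edges of $G$. Thus $g_i^{(n)}$ depends only on the variables indexed by $N_1(i)$ and the edges incident to $i$.

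If $\dist(i,j) > 2$, then $N_1(i) \cap N_1(j) = \emptyset$ and no edge is incident to both $i$ and $j$. The two variable sets are therefore disjoint, and by the joint independence in \Cref{asm:random} (with the graph fixed by \Cref{asm:exogenous-graph}) we get $\cov(g_i^{(n)}, g_j^{(n)}) = 0$.

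The third step counts the dependent pairs. For each $i$, $|N_2(i)| \leqslant 1 + \dmax + \dmax(\dmax-1) \leqslant 1 + \dmax^2$. This gives
\[
    \var(T_n) \leqslant \frac{1}{n^2}\sum_i \sum_{j \in N_2(i)} C^2 \leqslant \frac{C^2 (1+\dmax^2)}{n}.
\]
The right-hand side tends to $0$ because $\dmax = o(n^{1/2})$ by \Cref{asm:sparse-weak}, which concludes the argument.

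The main obstacle is the second step, namely making the dependency-graph claim rigorous. One must check carefully that $Y_i$, through $H_i$, and $\tilde H_i$ are functions only of the one-hop variables. This includes the tie-break variables $U_{ik}$, which are shared only between the two endpoints of an edge.

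Everything else is a routine bound. The only subtleties there are the uniform bound on $|Y_i|$ and the implicit uniform bound on the degree of $g^{(n)}$. I would state both explicitly at the outset.
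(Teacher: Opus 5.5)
Your proposal is correct and follows the paper's proof essentially step for step. Both arguments use a uniform bound $|g_i^{(n)}|\leqslant C$, independence of $g_i^{(n)}$ and $g_j^{(n)}$ whenever $\dist(i,j)>2$ via \Cref{asm:random}, a count of $O(n\dmax^2)$ dependent pairs giving a variance of $O(\dmax^2/n)\to 0$ under \Cref{asm:sparse-weak}, and Chebyshev. The caveats you flag are genuine gaps in the lemma's stated hypotheses that the paper's proof also relies on implicitly: the degree of $g^{(n)}$ must be bounded uniformly in $n$, and both the bound on $Y_i$ and its one-hop dependence tacitly use \Cref{asm:linearity,asm:locality,asm:summary} and the fixed graph.
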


\begin{proof}
    Since $S_i, D_i \in \{0, 1\}$, $H_i, \tilde H_i \in [0, 1]$, $Y_i$ is bounded by
    \Cref{asm:bounded}, and the coefficients of $g^{(n)}$ are bounded,
    $g_i^{(n)}$ is uniformly bounded by a constant $C$. Each $g_i^{(n)}$
    is determined by $S_i$, $D_i$, and the variables $\{S_j, D_j, U_{ij}\}$ on
    edges of $G$ incident to $i$, so by \Cref{asm:random} the terms $g_i^{(n)}$ and
    $g_j^{(n)}$ are independent unless $i$ and $j$ coincide, are adjacent in $G$, or
    share a common neighbor in $G$. Only such pairs contribute to the covariance sum,
    and each variance and covariance is bounded by $C^2$, so
    \begin{align*}
        \var\left(\frac{1}{n} \sum_i g_i^{(n)}\right)
            &= \frac{1}{n^2}\left(\sum_i \var\big(g_i^{(n)}\big)
                + \sum_{i \ne j} \cov\big(g_i^{(n)}, g_j^{(n)}\big)\right) \\
            &< \frac{1}{n^2} \left(C^2 n
                + C^2 \left(\frac{\dmax}{2} n + \binom{\dmax}{2} n\right)\right)
            = O\left(\frac{\dmax^2}{n}\right),
    \end{align*}
    which converges to zero by \Cref{asm:sparse-weak}. Chebyshev's inequality
    gives $\frac{1}{n} \sum_i \big(g_i^{(n)} - \E[g_i^{(n)}]\big) \toprob 0$.
\end{proof}

\subsection{Proof of \texorpdfstring{\Cref{thm:ols}}{Theorem 1}}
\label{app:proof-ols}

\begin{proof}
    We write $v_i \coloneq \var(H_i \mid S_i = 1)$ for the conditional design variance of exposure. The OLS slope equals the ratio of the sample covariance to the sample variance of $H_i$ over the sampled units,
    \[
        \hat\beta_{\mathrm{OLS}}^{(n)} = \frac{\frac{1}{N} \sum_i S_i (H_i - \bar H)(Y_i - \bar Y)}{\frac{1}{N} \sum_i S_i (H_i - \bar H)^2},
        \qquad
        \bar H \coloneq \frac{1}{N} \sum_i S_i H_i, \quad \bar Y \coloneq \frac{1}{N} \sum_i S_i Y_i.
    \]
    By \Cref{lma:uniform-compliance}, $\E[H_i \mid S_i = 1, D_i = d] = \zeta(d)$ does not depend on $i$, so the conditional mean
    \[
        \mu \coloneq \E[H_i \mid S_i = 1] = \tfrac{1}{2}\big(\zeta(1) + \zeta(0)\big)
    \]
    is the same constant for every node $i$. Consequently $\E[H_i^2 \mid S_i = 1] = v_i + \mu^2$. By \Cref{lma:local-average},
    \begin{align*}
        \frac{1}{n} \sum_i S_i &\toprob p, &
        \frac{1}{n} \sum_i S_i H_i &\toprob p\mu, \\
        \frac{1}{n} \sum_i S_i H_i^2 &\toprob \frac{p}{n} \sum_i (v_i + \mu^2), &
        \frac{1}{n} \sum_i S_i Y_i &\toprob \frac{p}{n} \sum_i (a_i + b_i \mu), \\
        \frac{1}{n} \sum_i S_i H_i Y_i &\toprob \frac{p}{n} \sum_i \big(a_i \mu + b_i(v_i + \mu^2)\big), &&
    \end{align*}
    where the last two use $Y_i = a_i + b_i H_i$ from \Cref{asm:linearity}.

    We identify the probability limits of the numerator and denominator of $\hat\beta_{\mathrm{OLS}}^{(n)}$ separately and combine them with Slutsky's theorem. For the denominator,
    \[
        \frac{1}{N} \sum_i S_i (H_i - \bar H)^2 = \frac{1}{N} \sum_i S_i (H_i^2 - \bar H^2) = \frac{\frac{1}{n} \sum_i S_i H_i^2}{\frac{1}{n} \sum_i S_i} - \left( \frac{\frac{1}{n} \sum_i S_i H_i}{\frac{1}{n} \sum_i S_i} \right)^2.
    \]
    Slutsky's theorem and the limits above give
    \[
        \frac{1}{N} \sum_i S_i (H_i - \bar H)^2 \toprob \frac{\frac{p}{n}\sum_i (v_i + \mu^2)}{p} - \mu^2 = \frac{1}{n} \sum_i v_i.
    \]
    Analogously,
    \[
        \frac{1}{N} \sum_i S_i (H_i - \bar H)(Y_i - \bar Y) = \frac{1}{N} \sum_i S_i (H_i Y_i - \bar H \bar Y) = \frac{\frac{1}{n} \sum_i S_i H_i Y_i}{\frac{1}{n} \sum_i S_i} - \frac{\left(\frac{1}{n} \sum_i S_i H_i\right)\left(\frac{1}{n} \sum_i S_i Y_i\right)}{\left(\frac{1}{n} \sum_i S_i\right)^2},
    \]
    and Slutsky's theorem yields
    \begin{align*}
        \frac{1}{N} \sum_i S_i (H_i - \bar H)(Y_i - \bar Y)
        &\toprob \frac{\frac{p}{n}\sum_i\big(a_i \mu + b_i(v_i + \mu^2)\big)}{p}
        - \frac{(p\mu)\big(\frac{p}{n}\sum_i(a_i + b_i\mu)\big)}{p^2} \\
        &= \mu \bar a + \frac{1}{n}\sum_i b_i v_i + \mu^2 \bar b - \mu\big(\bar a + \mu \bar b\big) = \frac{1}{n} \sum_i b_i v_i.
    \end{align*}

    The denominator limit $\frac{1}{n}\sum_i v_i$ is bounded away from zero, as we show below in \Cref{eq:vi-decomp}, so a final application of Slutsky's theorem gives
    \[
        \hat\beta_{\mathrm{OLS}}^{(n)} - \frac{\frac{1}{n} \sum_i b_i v_i}{\frac{1}{n} \sum_i v_i} \toprob 0.
    \]
    Writing $\bar v = \frac{1}{n}\sum_i v_i$, the limit decomposes as
    \[
        \frac{\sum_i b_i v_i}{\sum_i v_i} = \bar b + \frac{\frac{1}{n}\sum_i (b_i - \bar b)(v_i - \bar v)}{\bar v},
    \]
    so it equals $\thetan = \bar b$ if and only if the empirical covariance between $b_i$ and $v_i$ vanishes in the limit.

    Finally, we record the explicit form of $v_i$. By the law of total variance over $D_i$,
    \[
        v_i = \var_{D_i}\!\big(\E[H_i \mid S_i = 1, D_i]\big) + \E_{D_i}\!\big[\var(H_i \mid S_i = 1, D_i)\big].
    \]
    For the first term, $\E[H_i \mid S_i = 1, D_i = d] = \zeta(d)$ by \Cref{lma:uniform-compliance}, so it equals $\frac{1}{4}\big(\zeta(1) - \zeta(0)\big)^2$. For the second term, conditional on $D_i = d$ the edge treatments $\{Z_{ij}\}_j$ are independent across $j$ by \Cref{asm:random}. Each $Z_{ij} \in \{0, 1\}$ is binary with conditional mean $\E[Z_{ij} \mid S_i = 1, D_i = d] = \zeta(d)$, so $\var(Z_{ij} \mid S_i = 1, D_i = d) = \zeta(d)\big(1 - \zeta(d)\big)$ and
    \[
        \var(H_i \mid S_i = 1, D_i = d) = \sum_j w_{ij}^2\, \var(Z_{ij} \mid S_i = 1, D_i = d) = \zeta(d)\big(1 - \zeta(d)\big) \sum_j w_{ij}^2.
    \]
    Combining the two terms,
    \begin{equation}
        v_i = \frac{1}{4}\big(\zeta(1) - \zeta(0)\big)^2 + \kappa \sum_j w_{ij}^2,
    \label{eq:vi-decomp}
    \end{equation}
    where $\kappa \coloneq \big[\zeta(1)\big(1 - \zeta(1)\big) + \zeta(0)\big(1 - \zeta(0)\big)\big]/2$.
    The second term in \Cref{eq:vi-decomp} is increasing in the weight concentration $\sum_j w_{ij}^2$.
\end{proof}

\subsection{Proof of \texorpdfstring{\Cref{thm:consistency}}{Theorem 2}}
\label{app:proof-consistency}

\begin{proof}
    We first show that $\frac{1}{N_1} \sum_i Y_i S_i D_i - \frac{1}{n} \sum_i \E[Y_i \mid S_i = 1, D_i = 1] \toprob 0$. Writing the sample mean as a ratio of two averages,
    \[
        \frac{1}{N_1} \sum_i Y_i S_i D_i = \frac{\frac{1}{n} \sum_i Y_i S_i D_i}{\frac{1}{n} \sum_i S_i D_i},
    \]
    we treat the numerator and denominator separately. By \Cref{lma:local-average}, $\frac{1}{n} \sum_i Y_i S_i D_i - \frac{1}{n} \sum_i \E[Y_i S_i D_i] \toprob 0$. The denominator converges in probability to $\E[S_i D_i] = p/2 > 0$ by the law of large numbers. Combined with the numerator's convergence and the uniform bound $\left|\frac{1}{n} \sum_i \E[Y_i S_i D_i]\right| \leqslant M \cdot p/2$ from \Cref{asm:bounded}, Slutsky's theorem yields
    \[
        \frac{1}{N_1} \sum_i Y_i S_i D_i - \frac{\frac{1}{n} \sum_i \E[Y_i S_i D_i]}{\E[S_i D_i]} \toprob 0.
    \]
    The limit on the right simplifies as
    \[
        \frac{\frac{1}{n} \sum_i \E[Y_i S_i D_i]}{\E[S_i D_i]} = \frac{1}{n} \sum_i \frac{\E[Y_i S_i D_i]}{\PP[S_i = 1, D_i = 1]} = \frac{1}{n} \sum_i \E[Y_i \mid S_i = 1, D_i = 1],
    \]
    yielding $\frac{1}{N_1} \sum_i Y_i S_i D_i - \frac{1}{n} \sum_i \E[Y_i \mid S_i = 1, D_i = 1] \toprob 0$. The same argument, applied with $\tilde H_i$ in place of $Y_i$ and with $1 - D_i$ in place of $D_i$ for the control side, yields the three analogous statements for $\frac{1}{N_0} \sum_i Y_i S_i (1-D_i)$, $\frac{1}{N_1} \sum_i \tilde H_i S_i D_i$, and $\frac{1}{N_0} \sum_i \tilde H_i S_i (1-D_i)$. Combining the two $\tilde H_i$ statements gives $\DeltaHt^{(n)} - (\zeta(1) - \zeta(0)) \toprob 0$, and since $\zeta(1) - \zeta(0) = 1 - p/2$ is nonzero, Slutsky's theorem applied to $\thetahat = \DeltaY^{(n)}/\DeltaHt^{(n)}$ yields that $\thetahat$ converges in probability to the ratio of the limits. By \Cref{asm:linearity} and \Cref{lma:uniform-compliance}, this ratio simplifies as
    \begin{align*}
        &~ \frac{\frac{1}{n} \sum_i \E[Y_i \mid S_i = 1, D_i = 1] - \frac{1}{n} \sum_i \E[Y_i \mid S_i = 1, D_i = 0]}{\frac{1}{n} \sum_i \E[\tilde H_i \mid S_i = 1, D_i = 1] - \frac{1}{n} \sum_i \E[\tilde H_i \mid S_i = 1, D_i = 0]} \\
        =&~ \frac{\frac{1}{n} \sum_i (a_i + b_i \E[H_i \mid S_i = 1, D_i = 1]) - \frac{1}{n} \sum_i (a_i + b_i \E[H_i \mid S_i = 1, D_i = 0])}{\frac{1}{n} \sum_i \E[\tilde H_i \mid S_i = 1, D_i = 1] - \frac{1}{n} \sum_i \E[\tilde H_i \mid S_i = 1, D_i = 0]} \\
        =&~ \sum_i b_i \cdot \frac{\E[H_i \mid S_i = 1, D_i = 1] - \E[H_i \mid S_i = 1, D_i = 0]}{\sum_j (\E[\tilde H_j \mid S_j = 1, D_j = 1] - \E[\tilde H_j \mid S_j = 1, D_j = 0])} \\
        =&~ \sum_i b_i \cdot \frac{\zeta(1) - \zeta(0)}{\sum_j (\zeta(1) - \zeta(0))} = \bar{b},
    \end{align*}
    where the $a_i$ terms cancel and \Cref{lma:uniform-compliance} makes the compliance constant in $i$ for both the true and proxy exposures, so the weights are uniform, proving $\thetahat - \thetan \toprob 0$.
\end{proof}

\subsection{Proof of \texorpdfstring{\Cref{thm:nonlinear-weighting}}{Theorem 3}}
\label{app:proof-nonlinear-weighting}

\begin{proof}
    Applying the first part of the proof of \Cref{thm:consistency} with $Y_i=y_i(H_i)$, using the uniform boundedness of $y_i$, gives
    \[
        \DeltaY^{(n)}
        -\frac{1}{n}\sum_i
        \E\left[
            y_i\big(H_i(1)\big)-y_i\big(H_i(0)\big)
            \;\middle|\;S_i=1
        \right]
        \toprob0.
    \]
    The same proof already establishes $\DeltaHt^{(n)}-\big(\zeta(1)-\zeta(0)\big)\toprob0$.
    Slutsky's theorem therefore yields
    \begin{equation}
        \thetahat
        -\frac{1}{\zeta(1)-\zeta(0)}
        \frac{1}{n}\sum_i\E\left[
            y_i\big(H_i(1)\big)-y_i\big(H_i(0)\big)
            \;\middle|\;S_i=1
        \right]
        \toprob0.
    \label{eq:nonlinear-secant-limit}
    \end{equation}
    For a sampled node, define the potential edge treatment $Z_{ij}(d)$ by setting $D_i=d$ and leaving all other assignments and edge tie-breaks random. The pair $(Z_{ij}(0),Z_{ij}(1))$ equals $(0,1)$ when $S_j=0$, $(0,U_{ij})$ when $S_j=1,D_j=0$, and $(U_{ij},1)$ when $S_j=1,D_j=1$. Hence $Z_{ij}(1)\geqslant Z_{ij}(0)$ for every realization. Since the true edge weights are nonnegative, $H_i(1)\geqslant H_i(0)$.
    Absolute continuity of $y_i$ and the monotonicity $H_i(1)\geqslant H_i(0)$ imply
    \[
        y_i\big(H_i(1)\big)-y_i\big(H_i(0)\big)
        =\int_0^1y_i'(h)
        \mathds{1}\{H_i(0)<h\leqslant H_i(1)\}\,dh.
    \]
    Taking expectations and interchanging expectation and integration in \Cref{eq:nonlinear-secant-limit} gives \Cref{eq:nonlinear-acr,eq:nonlinear-weight}. In addition, uniform compliance (\Cref{lma:uniform-compliance}) gives
    \[
        \int_0^1\omega_i(h)\,dh
        =\frac{
            \E[H_i(1)-H_i(0)\mid S_i=1]
        }{\zeta(1)-\zeta(0)}
        =1.
    \]
\end{proof}

\subsection{Proof of \texorpdfstring{\Cref{lma:linearize}}{Lemma 2}}
\label{app:proof-linearize}

\begin{proof}
    The numerator in \Cref{eq:linear} can be expressed as an average
    \begin{align*}
        \DeltaY^{(n)} - \thetan \DeltaHt^{(n)} &= \frac{1}{N_1} \sum_i (Y_i - \thetan \tilde H_i) S_i D_i - \frac{1}{N_0} \sum_i (Y_i - \thetan \tilde H_i) S_i (1 - D_i) \\
        &= \frac{1}{N_1} \sum_i (Y_i - \thetan \tilde H_i - \abarn) S_i D_i - \frac{1}{N_0} \sum_i (Y_i - \thetan \tilde H_i - \abarn) S_i (1 - D_i) \\
        &= \frac{1}{n} \sum_i S_i \left( \frac{D_i}{N_1/n} - \frac{1-D_i}{N_0/n} \right) (Y_i - \thetan \tilde H_i - \abarn).
    \end{align*}
    We center each term by $\abarn$ intentionally, following the usual analysis for Hájek estimators.
    Let $X_i \coloneq Y_i - \thetan \tilde H_i - \abarn$ be the unscaled error for node $i$. We can rewrite the numerator terms of our estimator as sample averages. For the test group, we define the average transformed outcome $\Xbarn \coloneq \frac{1}{n}\sum_i S_i D_i X_i$ and the sample size proportion $\phatn \coloneq N_1 / n$. Analogous quantities $\bar X_0^{(n)}$ and $\hat p_0^{(n)}$ are defined for the control group. The linearized numerator can then be written as
    \[
        \DeltaY^{(n)} - \thetan \DeltaHt^{(n)} = \frac{\Xbarn}{\phatn} - \frac{\bar X_0^{(n)}}{\hat p_0^{(n)}}.
    \]
    We analyze the test ratio $\Xbarn/\phatn$ in detail. The control ratio $\bar X_0^{(n)}/\hat p_0^{(n)}$ is treated identically by symmetry, and the two are combined at the end.

    We first calculate the finite-$n$ expectations of the test elements. For the proportion, we have $\E[\phatn] = p/2$. For the average transformed outcome, we have
    \begin{align*}
        \E[\Xbarn] &= \frac{p}{2} \cdot \frac{1}{n} \sum_i \left( a_i + b_i \E[H_i \mid S_i=1, D_i=1] - \thetan \E[\tilde H_i \mid S_i=1, D_i=1] - \abarn \right) = 0,
    \end{align*}
    since $\thetan = \bar{b}^{(n)}$ and, by \Cref{lma:uniform-compliance}, $\E[H_i \mid S_i=1, D_i=1] = \E[\tilde H_i \mid S_i=1, D_i=1] = \zeta(1)$ is constant across all units. This is expected, since the $X_i$ are centered.

    Applying the delta method to $\Xbarn/\phatn$, we expand the function $(x, y) \mapsto x / y$ around the finite-$n$ expectations $(0, p/2)$, yielding
    \begin{align*}
        \frac{\Xbarn}{\phatn} &= 0 + \frac{1}{p/2}\left(\Xbarn - 0\right) - \frac{0}{(p/2)^2}\left(\phatn - \frac{p}{2}\right) + R_1^{(n)} = \frac{2}{p} \Xbarn + R_1^{(n)}.
    \end{align*}
    Substituting the definition of $\Xbarn$ back into the linear term gives
    \[
        \frac{\Xbarn}{\phatn} = \frac{1}{n} \sum_i \frac{2}{p} S_i D_i X_i + R_1^{(n)}.
    \]

    The linearization in the lemma requires the remainder $R_1^{(n)}$ to be $o_p(n^{-1/2})$. By Taylor's theorem, the remainder $R_1^{(n)}$ is a quadratic function of $\left(\Xbarn, \phatn - \frac{p}{2}\right)$ with coefficients proportional to entries of the Hessian matrix of $(x, y) \mapsto x/y$. We work on the event $\{\phatn > p/4\}$, whose complement has probability $O(e^{-cn})$ by Hoeffding's inequality and is therefore negligible for the $o_p(n^{-1/2})$ statement. There the Taylor intermediate point has second coordinate bounded below by $p/4$, so all Hessian entries of $f(x,y) = x/y$ are bounded (in particular $|f_{xy}| = 1/y^2 < 16/p^2$), and $\Xbarn = o_p(1)$ by \Cref{asm:bounded}. The $(\Xbarn)^2$ contribution vanishes outright because $\partial^2 f/\partial x^2 \equiv 0$. The $(\phatn - p/2)^2$ contribution is $o_p(n^{-1/2})$, since $\phatn$ is an average of i.i.d.\ $\Bern(p/2)$ indicators and so $\phatn - p/2 = O_p(n^{-1/2})$. The leading remainder is therefore the cross product $\Xbarn(\phatn - p/2)$. It suffices to show $\E\!\left|\Xbarn(\phatn - p/2)\right| = o(n^{-1/2})$.

    By Cauchy--Schwarz, $\E\!\left|\Xbarn(\phatn - p/2)\right| \leqslant \sqrt{\var(\Xbarn) \var(\phatn)}$. By the proof of \Cref{lma:local-average}, $\var(\Xbarn) = O(\dmax^2/n)$, and $\var(\phatn) = O(n^{-1})$ since $\phatn$ is an average of i.i.d.\ $\Bern(p/2)$ indicators. Thus the cross term is $O(\dmax/n) = o(n^{-1/2})$ under \Cref{asm:sparse-weak} ($\dmax = o(n^{1/2})$), and so $R_1^{(n)} = o_p(n^{-1/2})$.

    Following the same calculation for the control group and taking a difference yields
    \[
        \DeltaY^{(n)} - \thetan\DeltaHt^{(n)} = \frac{1}{n} \sum_i \frac{2}{p} S_i (2D_i - 1) X_i + o_p(n^{-1/2}) = \frac{1}{n}\sum_i \psi_i + o_p(n^{-1/2}). \qedhere
    \]
\end{proof}

\subsection{Proof of \texorpdfstring{\Cref{thm:normal}}{Theorem 4}}
\label{app:proof-normal}

\begin{proof}
    We start with the asymptotic normality of the numerator, as linearized in \Cref{lma:linearize}, $\DeltaY^{(n)} - \thetan\DeltaHt^{(n)} = \frac{1}{n} \sum_i \psi_i + o_p(n^{-1/2})$. It suffices to establish the asymptotic normality of $\frac{1}{n} \sum_i \psi_i$, and we apply Stein's method for local dependence \citep{stein1986approximate, baldi1989normal}.

    Recall that $\psi_i = \frac{2}{p}S_i(2D_i - 1)X_i$ where $X_i = Y_i - \thetan \tilde H_i - \abarn$. Substituting the linear model $Y_i = a_i + b_i H_i$ and expanding the exposure terms $H_i = \sum_j w_{ij}Z_{ij}$ and $\tilde H_i = \sum_j \tilde w_{ij}Z_{ij}$, we can decompose the node-level sum $\sum_i \psi_i$ into a sum over independent node baseline terms and undirected edge interaction terms. Let $V = \{N_i: i = 1, \ldots, n\} \cup \{E_{ij}: i < j\}$ be a combined set of random variables defined as
    \begin{align*}
        N_i &= \frac{2}{p}S_i(2D_i - 1)(a_i - \abarn), \\
        E_{ij} &= \frac{2}{p} Z_{ij}\left[S_i(2D_i - 1)(b_i w_{ij} - \thetan \tilde w_{ij}) + S_j(2D_j - 1)(b_j w_{ji} - \thetan \tilde w_{ji})\right],
    \end{align*}
    such that $\sum_i \psi_i = \sum_i N_i + \sum_{i < j} E_{ij}$. The complete sum is centered because
    \[
        \sum_i \E[\psi_i]
        = \left(\zeta(1)-\zeta(0)\right)\sum_i(b_i-\bar b)
        = 0.
    \]
    Define $N_i^\circ \coloneq N_i-\E[N_i]$ and $E_{ij}^\circ \coloneq E_{ij}-\E[E_{ij}]$. Then
    \[
        \sum_i N_i^\circ + \sum_{i<j}E_{ij}^\circ
        = \sum_iN_i + \sum_{i<j}E_{ij}
        = \sum_i\psi_i.
    \]

    We analyze the dependency graph of $V$. The node term $N_i$ depends only on the assignments $S_i$ and $D_i$. The edge term $E_{ij}$ depends on $S_i, D_i, S_j, D_j$, and the edge randomness $U_{ij}$. Consequently, $E_{ij}$ is independent of all other variables except $N_i, N_j$, and any other edge term incident to node $i$ or node $j$. The maximum degree of this combined dependency graph is at most $2\dmax$.

    To establish asymptotic normality, we apply Stein's method as stated in Theorem 1 of \citet{baldi1989normal}. Let $L(k)$ denote the maximum number of connected subgraphs of size $k$ containing a given vertex in the dependency graph. Since the maximum degree is at most $2\dmax$, we have $L(3) = O(\dmax^2)$ and $L(4) = O(\dmax^3)$.

    We must bound the sums of the absolute third and fourth centered moments over all $v \in V$. By \Cref{asm:bounded}, the potential outcomes $a_i$ and $b_i$ are uniformly bounded by $M$, so $\E[|N_i|^3] = O(1)$ and $\E[N_i^4] = O(1)$. For the edge terms, $|E_{ij}| \leqslant C(w_{ij} + \tilde w_{ij} + w_{ji} + \tilde w_{ji})$ for some constant $C$. Both weightings satisfy $\sum_j w_{ij} = \sum_j \tilde w_{ij} = 1$ and lie in $[0,1]$, so $w_{ij}^3 \leqslant w_{ij}$, $\tilde w_{ij}^3 \leqslant \tilde w_{ij}$, and likewise for the fourth powers. Therefore, the sum of the edge moments scales as
    \[
        \sum_{i < j} \E[|E_{ij}|^3] \leqslant 16C^3 \sum_{i < j} (w_{ij}^3 + \tilde w_{ij}^3 + w_{ji}^3 + \tilde w_{ji}^3) = 16C^3 \sum_{i,j} (w_{ij}^3 + \tilde w_{ij}^3) \leqslant 16C^3 \sum_{i,j} (w_{ij} + \tilde w_{ij}) = 32 C^3 n,
    \]
    where the first inequality uses Jensen's inequality $\left(\sum_{k=1}^m x_k\right)^q \leqslant m^{q-1} \sum_{k=1}^m x_k^q$ for nonnegative $x_k$. The exact same bounding logic holds for the fourth moments, giving
    \[
        \sum_i \E[|N_i|^3] + \sum_{i<j} \E[|E_{ij}|^3] = O(n), \quad \sum_i \E[|N_i|^4] + \sum_{i<j} \E[|E_{ij}|^4] = O(n).
    \]
    For $q \in \{3,4\}$, Jensen's inequality gives
    \[
        \E[|X-\E[X]|^q] \leqslant 2^q \E[|X|^q].
    \]
    Hence the centered moments of $N_i^\circ$ and $E_{ij}^\circ$ have the same orders. Centering does not change the dependency graph.

    By Stein's bound, the uniform distance between the distribution of the standardized sum $\frac{1}{\sqrt{n}\sigman} \sum_i \psi_i$ and the standard normal $\Normal(0, 1)$ is bounded by $c(T_3 + T_4)$, where
    \begin{align*}
        T_3 &= \sqrt{\frac{L(3)}{(\sqrt{n}\sigman)^3} \left(\sum_i \E[|N_i^\circ|^3] + \sum_{i<j} \E[|E_{ij}^\circ|^3]\right)}, \\
        T_4 &= \left(\frac{L(4)}{(\sqrt{n}\sigman)^4} \left(\sum_i \E[|N_i^\circ|^4] + \sum_{i<j} \E[|E_{ij}^\circ|^4]\right)\right)^{1/4}.
    \end{align*}
    Since $\sigman^2$ is bounded away from zero by \Cref{asm:no-superefficient}, substituting the moment bounds and dependency-graph rates gives
    \[
        T_3 = \sqrt{\frac{O(\dmax^2)}{O(n^{3/2})} O(n)} = O\!\left(\frac{\dmax}{n^{1/4}}\right), \qquad T_4 = \left(\frac{O(\dmax^3)}{O(n^2)} O(n)\right)^{1/4} = O\!\left(\frac{\dmax^{3/4}}{n^{1/4}}\right).
    \]
    Under \Cref{asm:sparse-medium}, $\dmax = o(n^{1/4})$, so both $T_3$ and $T_4$ converge to zero as $n \to \infty$.
    This application of Stein's method yields, via $\sqrt{n}(\DeltaY^{(n)} - \thetan\DeltaHt^{(n)}) = \frac{1}{\sqrt{n}}\sum_i \psi_i + o_p(1)$ from \Cref{lma:linearize},
    \[
        \frac{\sqrt{n}(\DeltaY^{(n)} - \thetan\DeltaHt^{(n)})}{\sigman} \todist \Normal(0, 1).
    \]
    Applying Slutsky's theorem allows us to divide the sequence by $\DeltaHt^{(n)}$ and the limit by $1 - p/2$, yielding $\frac{\sqrt{n}(\thetahat - \thetan)}{\sigman / (1 - p/2)} \todist \Normal(0, 1)$.
\end{proof}

\subsection{Proof of \texorpdfstring{\Cref{thm:variance-limit}}{Theorem 5}}
\label{app:proof-variance-limit}

\begin{proof}
    We define an oracle estimator $\sigmantilde^2 = \frac{1}{n} \sum_i \sum_{j \in N_2(i)} \psi_i \psi_j$ that uses the true parameters $\thetan$ and $\abarn$. We decompose the estimation error into three components
    \[
        \sigmanhat^2 - \sigman^2 - \Gamma_n
        = \underbrace{(\sigmanhat^2 - \sigmantilde^2)}_{\text{approximation error}}
        + \underbrace{(\sigmantilde^2 - \E[\sigmantilde^2])}_{\text{sampling error}}
        + \underbrace{(\E[\sigmantilde^2] - \sigman^2 - \Gamma_n)}_{\text{expectation gap}}.
    \]

    We first show that the expectation gap is zero. From the definition of $\psi_i$,
    \begin{align*}
        \E[\psi_i] &= \E[Y_i-\bar b\tilde H_i-\abarn \mid S_i=1,D_i=1] - \E[Y_i-\bar b\tilde H_i-\abarn \mid S_i=1,D_i=0] \\
        &= \left(\zeta(1)-\zeta(0)\right)(b_i-\bar b),
    \end{align*}
    where the second equality follows from \Cref{asm:linearity,lma:uniform-compliance}. Since $\psi_i$ and $\psi_j$ are independent when $j \notin N_2(i)$,
    \[
        \sigman^2
        = \frac{1}{n}\sum_i\sum_{j\in N_2(i)}
        \cov(\psi_i,\psi_j).
    \]
    Therefore,
    \begin{align*}
        \E[\sigmantilde^2]-\sigman^2 &= \frac{1}{n}\sum_i\sum_{j\in N_2(i)}\left(\E[\psi_i\psi_j]-\cov(\psi_i,\psi_j)\right) = \frac{1}{n}\sum_i\sum_{j\in N_2(i)}\E[\psi_i]\E[\psi_j] \\
        &= \left(\zeta(1)-\zeta(0)\right)^2 \frac{1}{n}\sum_i\sum_{j\in N_2(i)}(b_i-\bar b)(b_j-\bar b) = \Gamma_n.
    \end{align*}
    Thus the expectation gap is zero.

    We next address the approximation error, $\sigmanhat^2 - \sigmantilde^2$. We denote the estimation error for the linearized term for a single node as
    \[
        \delta_i \coloneq \hat{\psi}_i - \psi_i = \frac{2}{p}S_i(2D_i - 1)[-(\thetahat - \thetan)\tilde H_i - (\hat{a} - \abarn)].
    \]

    To establish the convergence rate of $\delta_i$, we first derive the rate $\thetahat - \thetan = O_p(\dmax/\sqrt n)$ without invoking \Cref{thm:normal} (and hence \Cref{asm:no-superefficient}). Chebyshev's inequality applied to \Cref{lma:linearize}, using the local-dependence variance bound from the proof of \Cref{lma:local-average}, gives $\DeltaY^{(n)} - \thetan\DeltaHt^{(n)} = O_p(\dmax/\sqrt n)$. Combined with $\DeltaHt^{(n)} \toprob \zeta(1)-\zeta(0)=1-p/2>0$ from \Cref{thm:consistency}, this yields the claim.

    We then bound the two terms in the bracket $-(\thetahat - \thetan)\tilde H_i - (\hat{a} - \abarn)$ separately. The first contributes $O_p(\dmax/\sqrt{n})$ by the rate above and the boundedness of $\tilde H_i$. For the second, we must bound the error of the sample baseline mean $\hat{a} = \frac{1}{N} \sum_i S_i(Y_i - \thetahat \tilde H_i)$. Substituting the linear model $Y_i = a_i + b_i H_i$ and the identity $\thetan = \bar{b}$, we factor out the random prefactor $n/N$ and decompose the estimator error as
    \begin{align*}
        \hat{a} - \abarn = \frac{n}{N} \Bigg[ & \frac{1}{n} \sum_i S_i (a_i - \abarn) + \frac{1}{n} \sum_i S_i (b_i - \thetan) H_i \\
        & + \frac{1}{n} \sum_i S_i \thetan (H_i - \tilde H_i) + \frac{1}{n} \sum_i S_i (\thetan - \thetahat) \tilde H_i \Bigg].
    \end{align*}
    Since $N/n \toprob p > 0$, the prefactor $n/N$ is $O_p(1)$, and the rates are governed by the four inner averages:
    \begin{enumerate}
        \item For the first, because $S_i$ are independent Bernoulli variables and $a_i$ are uniformly bounded by \Cref{asm:bounded}, its variance scales as $O(1/n)$. By Chebyshev's inequality, this term is $O_p(1/\sqrt{n})$.
        \item For the second, \Cref{lma:uniform-compliance} implies the expectation of the sum $\sum_i (b_i - \bar{b}) \PP(S_i = 1) \E[H_i \mid S_i = 1]$ is zero. By the proof of \Cref{lma:local-average}, the variance is $O(\dmax^2 / n)$ and this term is $O_p(\dmax/\sqrt{n})$.
        \item For the third, \Cref{lma:uniform-compliance} gives $\E[H_i \mid S_i = 1] = \E[\tilde H_i \mid S_i = 1]$, so the sum has mean zero. Since $\thetan = \bar b$ is bounded by \Cref{asm:bounded}, the proof of \Cref{lma:local-average} gives variance $O(\dmax^2 / n)$, so this term is $O_p(\dmax/\sqrt{n})$.
        \item For the fourth, the rate $\thetahat - \thetan = O_p(\dmax/\sqrt n)$ derived above and the boundedness of $\frac{1}{n}\sum_i S_i \tilde H_i$ give $O_p(\dmax/\sqrt n)$.
    \end{enumerate}

    Because $\dmax$ is allowed to grow with $n$, the second and third terms dominate, giving $\hat{a} - \abarn = O_p(\dmax/\sqrt{n})$. This dominates the first term as well, so
    \[
        \max_i |\delta_i| = O_p(\dmax/\sqrt{n}).
    \]

    We can now expand the difference of the products as $\hat{\psi}_i \hat{\psi}_j - \psi_i \psi_j = \delta_i \delta_j + \psi_i \delta_j + \delta_i \psi_j$, and rewrite the approximation error as
    \[
        \sigmanhat^2 - \sigmantilde^2 = \frac{1}{n} \sum_i \sum_{j \in N_2(i)} \left( \delta_i \delta_j + \psi_i \delta_j + \delta_i \psi_j \right).
    \]
    Since the influence terms $\psi_i$ are uniformly bounded and there are $O(n\dmax^2)$ ordered pairs with $j \in N_2(i)$, the triangle inequality gives
    \begin{align*}
        |\sigmanhat^2 - \sigmantilde^2| &\leqslant \frac{1}{n} \sum_i \sum_{j \in N_2(i)} \left(|\delta_i\delta_j| + |\psi_i\delta_j| + |\delta_i\psi_j|\right) = O_p\left(\frac{\dmax^4}{n} + \frac{\dmax^3}{\sqrt{n}}\right) = o_p(1),
    \end{align*}
    under \Cref{asm:sparse-strong}. Therefore, the approximation error vanishes, yielding $\sigmanhat^2 - \sigmantilde^2 \toprob 0$.

    Next we address the concentration of the oracle estimator, $\sigmantilde^2 - \E[\sigmantilde^2]$. The variance of this estimator expands into a sum of covariances between pairs in the dependency graph
    \[
        \var(\tilde{\sigma}_n^2) = \frac{1}{n^2} \sum_i \sum_{j \in N_2(i)} \sum_{k=1}^n \sum_{l \in N_2(k)} \cov(\psi_i \psi_j, \psi_k \psi_l).
    \]

    For the covariance term to be nonzero, the dependency neighborhoods of the two pairs must overlap, meaning $(N_1(i) \cup N_1(j)) \cap (N_1(k) \cup N_1(l)) \ne \emptyset$. When this happens, the covariance term is bounded because of \Cref{asm:bounded}. There are $O(n\dmax^2)$ choices for the local pair $(i,j)$. For each pair, choosing an overlapping node, a nearby endpoint $k$, and $l \in N_2(k)$ gives at most $O(\dmax^4)$ dependent pairs $(k,l)$, so the number of overlapping 4-tuples is bounded by $O(n\dmax^6)$. By \Cref{asm:sparse-strong}, $\dmax = o(n^{1/6})$, so $\dmax^6 = o(n)$ and the count is $o(n^2)$. Summing over this restricted set of overlapping dependencies yields
    \[
        \var(\tilde{\sigma}_n^2) \leqslant \frac{1}{n^2} \cdot o(n^2) = o(1).
    \]
    By Chebyshev's inequality, we have $\tilde{\sigma}_n^2 - \E[\tilde{\sigma}_n^2] \toprob 0$.

    Because the expectation gap is zero and both the approximation error and the sampling error vanish, we conclude that $\sigmanhat^2 - \sigman^2 - \Gamma_n \toprob 0$.
\end{proof}

\subsection{Proof of \texorpdfstring{\Cref{cor:conservative-ci}}{the conservative confidence interval corollary}}
\label{app:proof-conservative-ci}

\begin{proof}
    We can write the absolute value of the empirical test statistic as
    \[
        \left|\frac{\sqrt{n}(\thetahat-\thetan)}{\sigmanhat/\DeltaHt^{(n)}}\right|
        = \left|\frac{\sqrt{n}(\thetahat-\thetan)}{\sigman/(1-p/2)}\right|
        \frac{\sigman}{\sigmanhat}
        \left|\frac{\DeltaHt^{(n)}}{1-p/2}\right|.
    \]
    By \Cref{thm:normal}, the first term converges in distribution to the absolute value of a standard normal random variable. By \Cref{asm:no-superefficient}, there exists $c>0$ such that $\sigman^2\geqslant c$ for all sufficiently large $n$. Fix $\varepsilon>0$ and choose $\eta>0$ such that $2\eta<c[1-(1+\varepsilon)^{-2}]$. By \Cref{asm:nonnegative-local-alignment}, $\Gamma_n\geqslant-\eta$ for all sufficiently large $n$, while \Cref{thm:variance-limit} implies $\sigmanhat^2\geqslant\sigman^2+\Gamma_n-\eta$ with probability approaching one. Hence
    \[
        \sigmanhat^2\geqslant\frac{\sigman^2}{(1+\varepsilon)^2},
    \]
    with probability approaching one, which gives
    \[
        \PP\left(\frac{\sigman}{\sigmanhat}>1+\varepsilon\right)\to0.
    \]
    Finally, \Cref{thm:consistency} gives $\DeltaHt^{(n)} / (1-p/2) \toprob 1$. Therefore, for every $\varepsilon>0$, the absolute value of the empirical test statistic is bounded above by $(1+\varepsilon)$ times the asymptotically standard normal statistic with probability approaching one. Its asymptotic rejection probability at the critical value $z_{1-\alpha/2}$ is therefore bounded above by $\PP(|Z|>z_{1-\alpha/2}/(1+\varepsilon))$ for $Z \sim \Normal(0, 1)$. Letting $\varepsilon \downarrow 0$ gives an upper bound of $\alpha$ on the rejection probability, or equivalently a lower bound of $1-\alpha$ on coverage.
\end{proof}

\subsection{Proof of \texorpdfstring{\Cref{thm:diagonal-quadratic}}{the diagonal variance estimator theorem}}
\label{app:proof-diagonal-quadratic}

\begin{proof}
    By \Cref{lma:linearize} and \Cref{asm:linearity}, at $\tilde w = w$ the influence terms are $\psi_i = \frac{2}{p} S_i (2D_i - 1) X_i$ with $X_i = (a_i - \abarn) + (b_i - \bar b) H_i$. The argument of \Cref{thm:variance-limit}, restricted to the terms with $i = j$, gives $\sigmandiag^2 - \frac{1}{n} \sum_i \E[\psi_i^2] \toprob 0$. Since $\psi_i$ and $\psi_j$ are independent when $\dist(i, j) > 2$,
    \[
        \frac{1}{n} \sum_i \E[\psi_i^2] - \sigman^2
        = \frac{1}{n} \sum_i \left(\E[\psi_i]\right)^2
        - \frac{1}{n} \sum_i \sum_{j \in N_2(i), \, j \ne i}
        \left(\E[\psi_i\psi_j] - \E[\psi_i]\E[\psi_j]\right).
    \]

    For the first sum, the calculation in \Cref{app:proof-variance-limit} gives $\E[\psi_i] = (b_i - \bar b)\left(\zeta(1) - \zeta(0)\right)$. This gives the diagonal entries $\Lambda_{ii} = \left(\zeta(1) - \zeta(0)\right)^2$.

    For the off-diagonal raw-product sum, with $i \ne j$ we have
    \[
        X_iX_j = (a_i-\abarn)(a_j-\abarn) +(a_i-\abarn)(b_j-\bar b)H_j +(b_i-\bar b)(a_j-\abarn)H_i +(b_i-\bar b)(b_j-\bar b)H_iH_j.
    \]
    The term in $(a_i - \abarn)(a_j - \abarn)$ vanishes because $S_i, D_i, S_j, D_j$ are jointly independent by \Cref{asm:random} and $\E[2D_i - 1] = 0$. As shown in \Cref{app:proof-efficiency}, the cross term $(a_i - \abarn)(b_j - \bar b)H_j$ and its transpose vanish at $u = 1/2$ because $\E[(2D_i - 1)(2D_j - 1)Z_{ij} \mid S_i = S_j = 1] = (1-2u)/4 = 0$. Only the term in $(b_i - \bar b)(b_j - \bar b)$ survives, so
    \[
        \E[\psi_i \psi_j] = 4 (b_i - \bar b)(b_j - \bar b) \E\big[(2D_i - 1)(2D_j - 1) H_i H_j \mid S_i = S_j = 1\big],
    \]
    and subtracting $\E[\psi_i] \E[\psi_j]$ gives
    \[
        \Lambda_{ij} = \left(\zeta(1) - \zeta(0)\right)^2 - 4\, \E\big[(2D_i - 1)(2D_j - 1) H_i H_j \mid S_i = S_j = 1\big], \qquad i \ne j.
    \]

    For the bound, $\zeta(1) - \zeta(0) = 1 - p/2$ and $H_i$ lie in $[0, 1]$, so $|\Lambda_{ij}| \leqslant \left(\zeta(1) - \zeta(0)\right)^2 + 4 \leqslant 5$.
\end{proof}

\subsection{Efficiency of the true weights}
\label{app:proof-efficiency}

\begin{theorem}[Efficiency of the true weights]
    Suppose the slopes are homogeneous, $b_i \equiv \bar b = \thetan$, and conflicts are resolved symmetrically, $u = \E[U_{ij}] = 1/2$. Then the asymptotic variance of $\thetahat$ is minimized at the true weights $\tilde w = w$.
\label{thm:efficiency}
\end{theorem}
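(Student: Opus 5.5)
The asymptotic variance of $\thetahat$ in \Cref{thm:normal} is $\sigman^2/(\zeta(1)-\zeta(0))^2$. By \Cref{lma:uniform-compliance}, the denominator $\zeta(1)-\zeta(0)=1-p/2$ does not depend on the proxy weighting $\tilde w$. It therefore suffices to show that, for every $n$, $\sigman^2$ is minimized at $\tilde w = w$.

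\textbf{Step 1: decompose the linearized sum.} Under homogeneous slopes $b_i\equiv\bar b=\thetan$, the unscaled error of \Cref{lma:linearize} is
\[
X_i = (a_i-\abarn) + \bar b\,(H_i-\tilde H_i).
\]
This gives the decomposition
\[
\sum_i\psi_i = A + \bar b\, B(\tilde w),
\]
with
\[
A \coloneq \sum_i \frac{2}{p}S_i(2D_i-1)(a_i-\abarn),
\qquad
B(\tilde w) \coloneq \sum_i\frac{2}{p}S_i(2D_i-1)\sum_j (w_{ij}-\tilde w_{ij})Z_{ij}.
\]
The term $A$ does not depend on $\tilde w$, and $B(w)=0$ identically. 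Expanding the variance,
\[
n\sigman^2 = \var(A) + 2\bar b\,\cov\big(A,B(\tilde w)\big) + \bar b^2\var\big(B(\tilde w)\big).
\]
The last term is nonnegative and vanishes at $\tilde w=w$. The theorem therefore follows once we show $\cov(A,B(\tilde w))=0$ for every admissible $\tilde w$ when $u=1/2$. I expect this covariance computation to be the only real step.

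\textbf{Step 2: reduce the covariance to single-edge terms.} Since $\E[2D_i-1]=0$, we have $\E[A]=0$, so $\cov(A,B)=\E[AB]$. Write $B=\frac{2}{p}\sum_i\sum_j c_{ij}S_i(2D_i-1)Z_{ij}$ with $c_{ij}=w_{ij}-\tilde w_{ij}$; note that $\sum_j c_{ij}=0$ for each $i$. Consider a product of the term $S_k(2D_k-1)(a_k-\abarn)$ from $A$ with the term $c_{ij}S_i(2D_i-1)Z_{ij}$ from $B$. If $k\notin\{i,j\}$, the variable $D_k$ enters only through the first factor, so the expectation vanishes. Two cases remain.

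\emph{Case $k=i$.} Because $(2D_i-1)^2=1$, the product has expectation
\[
(a_i-\abarn)\,c_{ij}\,p\,\E[Z_{ij}\mid S_i=1].
\]
By \Cref{lma:uniform-compliance}, the conditional mean $\E[Z_{ij}\mid S_i=1]=\tfrac12(\zeta(1)+\zeta(0))$ is the same for all edges. Summing over $j$ against $c_{ij}$ then gives zero, since $\sum_j c_{ij}=0$.

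\emph{Case $k=j\neq i$.} The product requires $S_i=S_j=1$, and its expectation reduces to
\[
p^2\,\E\big[(2D_i-1)(2D_j-1)Z_{ij}\mid S_i=S_j=1\big].
\]
Enumerating the four assignment configurations: when $D_i=D_j$, the sign product is $+1$ and $Z_{ij}=D_i$, contributing $\tfrac14$. When $D_i\neq D_j$, the sign product is $-1$ and $Z_{ij}=U_{ij}$, contributing $-\tfrac u2$. The expectation is therefore
\[
\tfrac14-\tfrac u2=\tfrac{1-2u}{4},
\]
which is zero exactly at $u=1/2$. (This identity is also the one cited in the proof of \Cref{thm:diagonal-quadratic}, so I would state it here as a displayed fact.)

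\textbf{Step 3: conclude.} Both cases vanish, so $\cov(A,B(\tilde w))=0$ and
\[
n\sigman^2=\var(A)+\bar b^2\var\big(B(\tilde w)\big)\geqslant \var(A).
\]
Equality holds at $\tilde w=w$. Hence the true weights minimize $\sigman^2$ for every $n$, and therefore minimize the asymptotic variance $\sigman^2/(1-p/2)^2$.
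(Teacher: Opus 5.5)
Your proposal is correct and follows essentially the same route as the paper: the same split of $\psi_i$ into a weight-free baseline part plus $\bar b$ times the misspecification term $\sum_j (w_{ij}-\tilde w_{ij})Z_{ij}$. The cross-covariance is killed in the same two ways: for own-node terms by uniform compliance together with $\sum_j (w_{ij}-\tilde w_{ij})=0$, and for adjacent pairs by $\E[(2D_i-1)(2D_j-1)Z_{ij}\mid S_i=S_j=1]=(1-2u)/4=0$. Your enumeration of the four assignment configurations, and your remark that the denominator $1-p/2$ does not depend on $\tilde w$, spell out steps the paper leaves implicit.
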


\begin{proof}
    Under homogeneity, $Y_i = a_i + \thetan H_i$, so the unscaled error of \Cref{lma:linearize} becomes
    \[
        \tilde X_i = Y_i - \thetan \tilde H_i - \abarn = (a_i - \abarn) + \bar b \check H_i, \qquad \check H_i \coloneq H_i - \tilde H_i = \sum_k (w_{ik} - \tilde w_{ik}) Z_{ik}.
    \]
    The influence function $\psi_i = \frac{2}{p} S_i (2D_i - 1) \tilde X_i$ therefore splits as $\psi_i = \psi_i^A + \bar b \phi_i$, where
    \[
        \psi_i^A = \frac{2}{p} S_i (2D_i - 1)(a_i - \abarn), \qquad \phi_i = \frac{2}{p} S_i (2D_i - 1) \check H_i,
    \]
    and therefore
    \[
        \sigman^2 = \var\left(\frac{1}{\sqrt n} \sum_i \psi_i\right) = \sigma_A^2 + 2 \bar b C + \bar b^2 V,
    \]
    where $\sigma_A^2 = \var(\frac{1}{\sqrt n} \sum_i \psi_i^A)$ is the no-treatment-effect variance of \Cref{sec:diagonal-variance} and does not depend on weights, $C = \frac{1}{n} \sum_{i, j} \cov(\psi_i^A, \phi_j)$, and $V = \var(\frac{1}{\sqrt n} \sum_i \phi_i)$.

    We first show $C = 0$. For the own-node terms $i = j$, we have $\E[\check H_i \mid S_i = 1, D_i = d] = \zeta(d) \sum_k (w_{ik} - \tilde w_{ik}) = 0$, so $\E[\psi_i^A \phi_i] = \frac{4}{p^2}(a_i - \abarn) \E[S_i \check H_i] = 0$. For $i \ne j$, $\cov(\psi_i^A, \phi_j)$ is nonzero only when $i$ and $j$ are adjacent, in which case it is governed by $\E[S_i (2D_i - 1) S_j (2D_j - 1) Z_{ij}]$, which by \Cref{asm:random} and the assignment rule vanishes at $u = 1/2$, where the fair tie-break leaves $Z_{ij}$ uncorrelated with $(2D_i - 1)(2D_j - 1)$.

    Since $V \geqslant 0$ as a variance and $\check H_i \equiv 0$ when $\tilde w = w$, we have $\sigman^2 = \sigma_A^2 + \bar b^2 V \geqslant \sigma_A^2$, with the minimum attained at the true weights $\tilde w = w$.
\end{proof}

\clearpage
\setcounter{figure}{0}
\renewcommand{\figurename}{Appendix Figure}
\renewcommand{\thefigure}{\thesection.\arabic{figure}}
\renewcommand{\theHfigure}{appendix.\thesection.\arabic{figure}}

\section{Additional results}
\label{app:simulation-results}

\Cref{fig:nonlinear-weighting} illustrates the exposure-level weights in \Cref{thm:nonlinear-weighting} for users with different numbers of equal-weight neighbors.

\begin{figure}[!htbp]
    \centering
    \includegraphics[width=\textwidth]{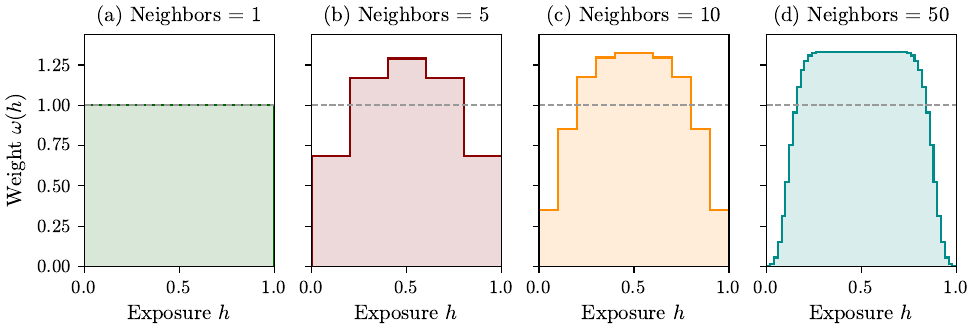}
    \caption{Exposure-level weights under nonlinear dose-response.}
    \label{fig:nonlinear-weighting}
    \vspace{1ex}
    \begin{minipage}{\textwidth}
        \footnotesize \textit{Note:} The panels show the analytical weighting functions for users with 1, 5, 10, and 50 equal-weight neighbors. The experiment sampling proportion is 50\%, and conflicting edges use a symmetric tie-break. The dashed line marks uniform weighting, $\omega_i(h)=1$.
    \end{minipage}
\end{figure}

\Cref{fig:simulation-distributions} reports the sampling distributions of the Wald estimator for the two outcome metrics in our empirical study.

\begin{figure}[!htbp]
    \centering
    \includegraphics[width=\textwidth]{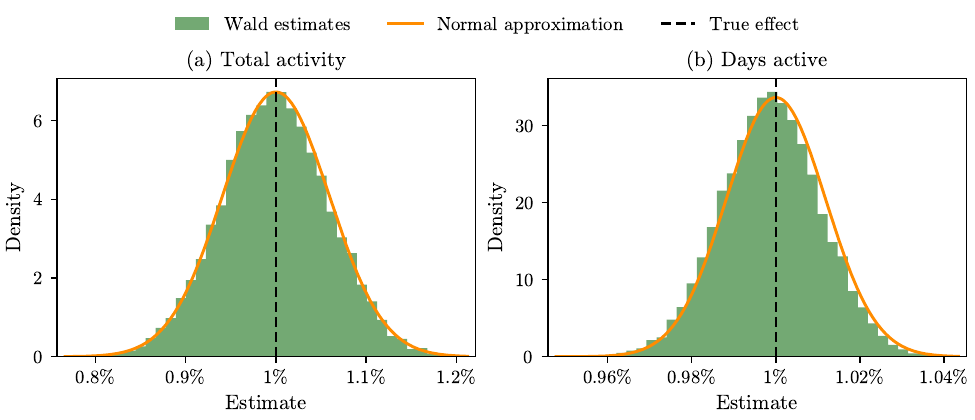}
    \caption{Sampling distributions of the Wald estimator.}
    \label{fig:simulation-distributions}
    \vspace{1ex}
    \begin{minipage}{\textwidth}
        \footnotesize \textit{Note:} The panels show histograms of relative Wald estimates with the correct exposure weights across 20,000 Monte Carlo draws. The normal approximation is the density of a normal distribution centered at the true effect, with standard deviation equal to that of the simulated estimates. The experiment sampling proportion is 50\%, and the simulated relative treatment effect is 1\%.
    \end{minipage}
\end{figure}

\end{appendices}

\end{document}